\documentclass{article}

\newtheorem{theorem}{Theorem}[section]
\newtheorem{corollary}{Corollary}[section]

\newtheorem{proposition}{Proposition}[section]
\newtheorem{definition}{Definition}[section]

\newtheorem{example}{Example}[section]
\newtheorem{proof}{Proof}[section]

\usepackage{arxiv}
\usepackage{subcaption}
\usepackage{graphicx}
\usepackage[utf8]{inputenc} % allow utf-8 input
\usepackage[T1]{fontenc}    % use 8-bit T1 fonts
\usepackage{hyperref}       % hyperlinks
\usepackage{url}            % simple URL typesetting
\usepackage{booktabs}       % professional-quality tables
\usepackage{amsfonts}       % blackboard math symbols
\usepackage{amsmath}
\usepackage{amssymb}
\usepackage{nicefrac}       % compact symbols for 1/2, etc.
\usepackage{microtype}      % microtypography
\usepackage{cleveref}       % smart cross-referencing
\usepackage{graphicx}
\usepackage{doi}

\title{An Algebraic Construction Technique for Codes over Hurwitz Integers}

% Here you can change the date presented in the paper title
%\date{September 9, 1985}
% Or remove it
%\date{}

\author{{\includegraphics[scale=0.06]{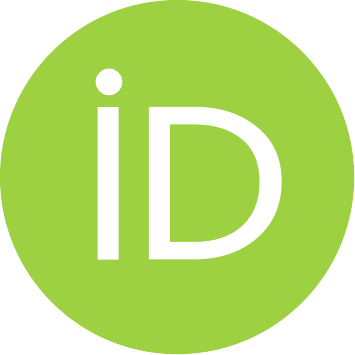}\hspace{1mm}Ramazan ~DURAN} \\
	Department of Mathematics\\
	Afyon Kocatepe University\\
	Afyonkarahisar, TR 03200 \\
	%% examples of more authors
	\And
	{\includegraphics[scale=0.06]{orcid.pdf}\hspace{1mm}Murat~GUZELTEPE} \\
	Department of Mathematics\\
	Sakarya University\\
	Sakarya, TR 54187 \\
	%% \AND
	%% Coauthor \\
	%% Affiliation \\
	%% Address \\
	%% \texttt{email} \\
	%% \And
	%% Coauthor \\
	%% Affiliation \\
	%% Address \\
	%% \texttt{email} \\
	%% \And
	%% Coauthor \\
	%% Affiliation \\
	%% Address \\
	%% \texttt{email} \\
}

% Uncomment to override  the `A preprint' in the header
%\renewcommand{\headeright}{Technical Report}
%\renewcommand{\undertitle}{Technical Report}

%%% Add PDF metadata to help others organize their library
%%% Once the PDF is generated, you can check the metadata with
%%% $ pdfinfo template.pdf
%\hypersetup{
%pdftitle={A template for the arxiv style},
%pdfsubject={q-bio.NC, q-bio.QM},
%pdfauthor={David S.~Hippocampus, Elias D.~Striatum},
%pdfkeywords={First keyword, Second keyword, More},
%}

\begin{document}
\maketitle

\begin{abstract}
Let $\alpha$ be a prime Hurwitz integer. $\mathcal{H}_{\alpha}$, which is the set of residual class with respect to related modulo function in the rings of Hurwitz integers, is a subset of $\mathcal{H},$ which is the set of all Hurwitz integers. We consider left congruent modulo $\alpha$ and, the domain of related modulo function in this study is $\mathbb{Z}_{N(\alpha)},$ which is residual class ring of ordinary integers with $N(\alpha)$ elements, which is the norm of prime Hurwitz integer $\alpha.$ In this study, we present an algebraic construction technique, which is a modulo function formed depending on two modulo operations, for codes over Hurwitz integers. Thereby, we obtain the residue class rings of Hurwitz integers with $N(\alpha)$ size. In addition, we present some results for mathematical notations used in two modulo functions, and for the algebraic construction technique formed depending upon two modulo functions. Moreover, we presented graphs obtained by graph layout methods, such as spring, high-dimensional, and spiral embedding,  for the set of the residual class obtained with respect to the related modulo function in the rings of Hurwitz integers.
\end{abstract}

% keywords can be removed
\keywords{Algebraic construction \and average energy \and block code \and code rate \and graph}

\section{Introduction} \label{sec1}

The first study on codes over high-dimensional algebraic structures such as Gauss integers, Einstein-Jacobi integers, quaternions, Lipschitz integers, and Hurwitz integers was got by Huber. In \cite{bib1}, Huber showed how block codes over Gaussian integers could be used for coding over two-dimensional algebraic structure. The set of Gaussian integers that denoted by $\mathbb{Z}[i]$ is shown by $\mathbb{Z}[i]=\{ \alpha = \alpha_{1} + \alpha_{2}i : \alpha_{1} , \alpha_{2} \in \mathbb{Z} \ \text{where} \ i^2=-1 \}.$ In this study, we use $\mathcal{G}$ notation instead of $\mathbb{Z}[i].$ $\alpha_{1}$ is the real part, and $\alpha_{2}i$ is the imaginary part. The conjugate of Gaussian integer $\alpha$ is $\overline{\alpha}=\alpha_{1}-\alpha_{2}i.$  The norm of Gaussian integer $\alpha$ is $N(\alpha)=\alpha \overline{\alpha}=\alpha^{2}_{1}+\alpha^{2}_{2}.$ A Gaussian integer is called a prime Gaussian integer if its norm is a prime integer. In \cite{bib1}, Huber considered Gaussian integers such that $p\equiv1 \mod N(\alpha)$ where $p$ is a prime integer and, presented a technique known as the modulo function to construct block codes over Gaussian integers, a two-dimensional vector space. In this way, Huber constructed one Mannheim error-correcting (OMEC) codes over Gaussian integers fields. The modulo function $\mu:\mathcal{G}\rightarrow\mathcal{G}_{\alpha}$ is defined according to
\begin{equation}\label{equu1}
\mu(\xi)=\xi \mod \alpha = \eta = \xi - \left[ \frac{\xi \overline\alpha}{\alpha \overline\alpha} \right] \alpha
\end{equation}
where $\xi \in \mathcal{G}$ and $\alpha$ is a prime Gaussian integer \cite{bib1}. Here $\mathcal{G}_{\alpha}$ is the set of residual class with respect to modulo $\alpha.$ We give the definition of mathematical notation $[\cdot]$ in the following section. In a similar technique, in \cite{bib2}, Huber showed how block codes over Eisenstein-Jacobi integers could be used for coding over two-dimensional vector space. In \cite{bib3}, Freudenberger et al. presented new coding techniques for codes over Gaussian integers by using the modulo function in \cite{bib1}.

Quaternions, a four-dimensional algebraic structure, are a number system that expands over complex numbers. They form $\alpha=\alpha_{1}+\alpha_{2}i+\alpha_{3}j+\alpha_{4}k$ where $\alpha_{1}, \alpha_{2}, \alpha_{3}, \alpha_{4} \in \mathbb{R}$ such that $i^{2}=j^{2}=k^{2}=-1,$ and $ij=-ji=k,$ $jk=-kj=i,$ $ki=-ik=j.$ $\alpha_{1}, \alpha_{2}, \alpha_{3},$ and $\alpha_{4}$ are the components of quaternion $\alpha$. In \cite{bib4}, Ozen and Guzeltepe studied codes over some finite fields by using quaternion integers, which have the commutative property. In \cite{bib5}, Ozen and Guzeltepe studied cyclic codes over some finite quaternion integer rings. They considered the quaternion integers which have the commutative property. In \cite{bib6}, Shah and Rasool established that; over quaternion integers, for a given $n$ length cyclic code there exists a cyclic code of length $2n-1.$ A lipschitz integer is a quaternion whose components are all integers. In \cite{bib7}, Freudenberger et al. presented new block codes over Lipschitz integers. They consider primitive Lipschitz integers, which is the greatest common divisor of components is one. A quaternion is called a Hurwitz integer if $\alpha_{1}, \alpha_{2}, \alpha_{3}, \alpha_{4}$ are either in $\mathbb{Z}$ or in $\mathbb{Z}+\frac{1}{2}.$ Also, Hurwitz integers forms a subring of the ring of all quaternions. If the norm of a Hurwitz integer is a prime integer, then it is called a prime Hurwitz integer. In this study, we consider prime Hurwitz integers. In \cite{bib8}, Ozen and Guzeltepe obtained new classes of linear codes over Hurwitz integers. They considered Hurwitz integers, which have the commutative property. In \cite{bib9}, Rohweder et al. presented a new algebraic construction technique for codes over Hurwitz integers that is inherently accompanied by a respective modulo operation. They were considered that the domain of related modulo function in \cite{bib9} is $\mathbb{Z}_{N(\alpha)} \times \mathbb{Z}_{N(\alpha)}.$ Therefore, they constructed new sets of residual class, which has $N^{2}(\alpha)$ elements, with respect to the related modulo technique in the rings of Hurwitz integers, using the primitive Lipschitz integers. Other studies on codes over Hurwitz integers were presented in papers to \cite{bib10}, \cite{bib11}, and \cite{bib12}.

The presentation of our results is organized as follows: In the\hyperref[sec2]{next section}, we give the necessary fundamental definitions used throughout this paper. In \hyperref[sec3]{section \ref{sec3}}, we present an algebraic construction technique, which is a modulo function formed depending upon two modulo functions such that the domain of it is $\mathbb{Z}_{N(\alpha)}.$  Thereby, we construct new set of residual class with $N(\alpha)$ elements, with respect to the related modulo function in the rings of Hurwitz integers. In \hyperref[sec4]{section \ref{sec4}}, we give some of the natural results for mathematical notations in two modulo functions and the modulo function. In \hyperref[sec5]{section \ref{sec5}}, we give examples for the algebraic construction technique given in this study. In \hyperref[sec6]{section \ref{sec6}}, we investigate the code rates of some codes over Hurwitz integers. In \hyperref[sec7]{section \ref{sec7}}, we give the points and graph on $2D$ (two-dimensional) and $3D$ (three-dimensional) the elements of the residual class obtained with respect to the related modulo function in prime Hurwitz integers using graph layout methods such as spring, high-dimensional, and spiral embedding. We conclude this paper with \hyperref[sec8]{section \ref{sec8}}.

\section{Preliminaries} \label{sec2}

We begin with some basic definitions.

\begin{definition} \label{def1}
The Hamilton quaternion algebra over $R$, denoted by $\mathbb{H}(\mathbb{R})$, is the associative unital algebra given by the following presentation:
\begin{itemize}
  \item[i.] $\mathbb{H}(\mathbb{R})$ is free $R-$module over the symbols $1,$ $i,$ $j,$ $k;$ that is, $\mathbb{H}(\mathbb{R}) = \{ \alpha = \alpha_{1}+\alpha_{2}i+\alpha_{3}j+\alpha_{4}k : \alpha_{1}, \alpha_{2}, \alpha_{3}, \alpha_{4} \in \mathbb{R} \};$
  \item[ii.] $1$ is the multiplicative unit;
  \item[iii.] $ i^{2}=j^{2}=k^{2}=-1;$
  \item[iv.] $ ij=-ji=k;$ $jk=-kj=i;$ $ki=-ik=j$ \cite{bib13}.
\end{itemize}
\end{definition}

Note that, $\alpha_{1}$ is the real part, $\alpha_{2}i+\alpha_{3}j+\alpha_{4}k$ is imaginary part, and $\alpha_{1},\alpha_{2},\alpha_{3},$ and  $\alpha_{4}$ are the components. If $\alpha_{1}, \alpha_{2}, \alpha_{3}, \alpha_{4} \in \mathbb{Z},$ then $\alpha$ is called a quaternion integer. The commutative property of multiplication does not hold over quaternion integers, in general. The set of quaternion integers which have the commutative property that denoted by $\mathcal{H}^{c}$ is shown by $\mathcal{H}^{c}= \{\alpha_{1} +\alpha_{2} ( i+j+k ): \alpha_{1},\alpha_{2} \in \mathbb{Z} \}.$ In the other words, if the imaginary parts of quaternion integers are parallel to each other, then their product is commutative.

\begin{definition}\label{def2}
$\alpha  = \alpha_{1} + \alpha_{2}i + \alpha_{3}j + \alpha_{4}k$ is called a Hurwitz integer just if either $ \alpha_{1}, \alpha_{2}, \alpha_{3}, \alpha_{4} \in \mathbb{Z}$ or $ \alpha_{1}, \alpha_{2}, \alpha_{3}, \alpha_{4} \in \mathbb{Z}+\frac{1}{2}. $ The set of all Hurwitz integers that denoted by $\mathcal{H}$ is shown by
\begin{equation} \label{equ1}
\begin{array}{lll}
 \mathcal{H} & = & \left\{ { \alpha_{1} + \alpha_{2}i + \alpha_{3}j + \alpha_{4}k  : \alpha_{1}, \alpha_{2}, \alpha_{3}, \alpha_{4}  \in \mathbb{Z} \ or \ \alpha_{1}, \alpha_{2}, \alpha_{3}, \alpha_{4}  \in \mathbb{Z}+\frac{1}{2} } \right\} \hfill \\
 & = & \mathcal{H}(\mathbb{Z}) \bigcup \mathcal{H}(\mathbb{Z}+\frac{1}{2}).
\end{array}
\end{equation}
\end{definition}

\begin{example}\label{ex1}
\begin{itemize}
  \item[i.] $\pm 1 \pm\frac{1}{2}i \pm \frac{1}{2}j \pm \frac{1}{2}k$ is not a Hurwitz integer.
  \item[ii.] $\pm \frac{1}{2} \pm \frac{1}{2}j$ is not a Hurwitz integer.
  \item[iii.] $\pm \frac{1}{2} \pm \frac{1}{2}i \pm \frac{1}{2}j \pm \frac{1}{2}k$ is a Hurwitz integer.
\end{itemize}
\end{example}

The ring of Hurwitz integers forms a subring of the ring of all quaternions because of closed under multiplication and addition. The conjugate of $\alpha$ is $\overline \alpha = \alpha_{1} - \alpha_{2}i - \alpha_{3}j - \alpha_{4}k,$ the norm of $\alpha$ is $ N\left( \alpha  \right) = \alpha  \cdot \overline \alpha  = \alpha _{1}^2 + \alpha _{2}^2 + \alpha _{3}^2 + \alpha _{4}^2,$ and the inverse of $\alpha$ is ${\alpha ^{ - 1}} = \frac{{\overline \alpha }}{{N\left( \alpha  \right)}}$ where $N(\alpha) \neq 0.$

\begin{definition}\label{def4} Let $\alpha  = \alpha_{1} + \alpha_{2}i + \alpha_{3}j + \alpha_{4}k$ be a Hurwitz integers such that $\alpha_{1}, \alpha_{2}, \alpha_{3}, \alpha_{4} \in \mathbb{Z}.$ $\alpha$ is called a primitive Hurwitz integer just if the greatest common divisor ($\mathrm{gcd}$) of $ \alpha_{1}, \alpha_{2}, \alpha_{3},$ and $\alpha_{4}$ is one, namely, $(\alpha_{1}, \alpha_{2}, \alpha_{3}, \alpha_{4})=1.$
\end{definition}

\begin{definition}\label{def5} Let $\alpha  = \alpha_{1} + \alpha_{2}i + \alpha_{3}j + \alpha_{4}k$ be a Hurwitz integers such that $\alpha_{1}=\beta_{1}+\frac{1}{2}, \alpha_{2}=\beta_{2}+\frac{1}{2}, \alpha_{3}=\beta_{2}+\frac{1}{2}, \alpha_{4}=\beta_{4}+\frac{1}{2}$ where $\beta_{1}, \beta_{2}, \beta_{3}, \beta_{4} \in \mathbb{Z}.$  $\alpha$ is called a primitive Hurwitz integer just if the greatest common divisor ($\mathrm{gcd}$) of $ 2\beta_{1}+1, 2\beta_{2}+1, 2\beta_{3}+1,$ and $2\beta_{4}+1$ is one, namely, $(2\beta_{1}+1, 2\beta_{2}+1, 2\beta_{3}+1, 2\beta_{3}+1)=1.$
\end{definition}

\begin{definition} \label{def6}
$\alpha  = \alpha_{1} + \alpha_{2}i + \alpha_{3}j + \alpha_{4}k$ is called a prime Hurwitz integer if its norm is a prime integer.
\end{definition}

From \hyperref[def4]{definition \ref{def4}}, \hyperref[def5]{definition \ref{def5}} and \hyperref[def6]{definition \ref{def6}}, we say that every prime Hurwitz integer is a primitive Hurwitz integer but converse may not be true. Note that we consider prime Hurwitz integers in this study.

\begin{definition} \label{def7}
Let $\alpha$ be a Hurwitz integer. If there exists $\lambda \in \mathcal{H}$ such that $q_{1} - q_{2} = \lambda \pi,$ then $q_{1}, q_{2} \in \mathcal{H}$ are said to be right congruent modulo $\alpha.$ This relation is denoted by $q_{1} \equiv_{r} q_{2}.$ This relation $q_{1} \equiv_{r} q_{2}$ is an equivalence relation. The elements in the right ideal
\begin{equation}\label{equ2}
\langle \alpha \rangle = \{ \lambda \alpha : \lambda \in \mathcal{H} \}
\end{equation}
define a normal subgroup of the additive group of the ring $\mathcal{H}.$ The set of cosets to $\langle \alpha \rangle$ in $\mathcal{H}$ defines Abelian group denoted by $\mathcal{H}_{\alpha}=\mathcal{H}\diagup \langle \alpha \rangle.$ Analogous results are valid for left congruences modulo $\alpha$ \cite{bib8}.
\end{definition}

Note that we use left congruences modulo $\alpha$ in this study. The following section is going to present the algebraic construction technique, formed depending on two modulo functions, used to construct block codes of $N(\alpha)$ size over Hurwitz integers. However, we close this section by giving the definitions of two mathematical notations used in the modulo functions before going to the next section.

\begin{definition} \label{def8}
A notation for rounding to the nearest integer is denoted by $\lfloor \cdot \rceil.$ It is defined as rounding a rational number to the integer closest to it. All components in a quaternion are separately rounding to the integer closest to it.
\end{definition}

\begin{definition} \label{def9}
A notation for rounding to the nearest half-integer is denoted by $\lfloor \lfloor \cdot \rceil \rceil.$ It is defined as rounding a rational number to the half-integer closest to it. All components in a quaternion are separately rounding to the half-integer closest to it \cite{bib11}.
\end{definition}

Note that if the component is negative or zero, then the rounding is done in the direction up. Otherwise the rounding is done in the direction down in this study.

\begin{example}\label{ex2}
\begin{itemize}
  \item[i.] $\lfloor \frac{3}{2}\rceil = 1,$ $ \lfloor -\frac{1}{2} \rceil = 0,$ $ \lfloor -\frac{3}{2} \rceil = -1,$  and so on.
  \item[ii.] $\lfloor \lfloor 0 \rceil\rceil = \frac{1}{2}$ where $0 \in \mathbb{Z},$ $ \lfloor\lfloor -1 \rceil\rceil = -\frac{1}{2},$ $ \lfloor\lfloor 1 \rceil\rceil = \frac{1}{2},$  and so on.
  \item[iii.] $\lfloor \lfloor 0 \rceil\rceil = \frac{1}{2}+\frac{1}{2}i+\frac{1}{2}j+\frac{1}{2}k$ where $0$ is a quaternion.
\end{itemize}
\end{example}

\section{Algebraic Construction Technique} \label{sec3}

This section presents an algebraic construction technique formed depending on two modulo functions used to construct codes of $N(\alpha)$ size over Hurwitz integers.

\begin{definition}\label{def10}
Let $\alpha$ be a prime Hurwitz integer. The modulo function $\mu: \mathbb{Z}_{N(\alpha)} \rightarrow \mathcal{H}_{\alpha}$ is defined by
\begin{equation}\label{equ3}
\mu_{\alpha}(z) = \mathrm{min} \{ \mu_{\alpha}^{(1)}(z), \mu_{\alpha}^{(2)}(z) \}
\end{equation}
where
\begin{equation}\label{equ4}
\mu_{\alpha}^{(1)}(z) = z \mod \alpha = z - \alpha \lfloor \frac{\overline{\alpha} z}{N(\alpha)} \rceil
\end{equation} and
\begin{equation}\label{equ5}
\mu_{\alpha}^{(2)}(z) = z \mod \alpha = z - \alpha \lfloor \lfloor \frac{\overline{\alpha} z}{N(\alpha)} \rceil \rceil
\end{equation} such that
\begin{equation}\label{equ6}
    \mathrm{min} \{ \mu_{\alpha}^{(1)}(z), \mu_{\alpha}^{(2)}(z) \} = \begin{cases}
                                                                  \mu_{\alpha}^{(1)}(z), & \mbox{if } N(\mu_{\alpha}^{(1)}(z)) < N(\mu_{\alpha}^{(2)}(z)) \\
                                                                  \mu_{\alpha}^{(1)}(z), & \mbox{if } N(\mu_{\alpha}^{(1)}(z)) = N(\mu_{\alpha}^{(2)}(z)) \mbox{ where z is an even integer} \\
                                                                  \mu_{\alpha}^{(2)}(z), & \mbox{otherwise}.
                                                                \end{cases}
\end{equation}
where $z \in \mathbb{Z}_{N(\alpha)}.$ Here $\mathbb{Z}_{N(\alpha)}$ is the well-known residual class ring of ordinary integers with $N(\alpha)$ elements, $\mathcal{H}_{\alpha}$ is the residual class ring of prime Hurwitz integer $\alpha,$ and $\mu_{\alpha}^{(1)}(z)$ and $\mu_{\alpha}^{(2)}(z)$ is given remainder of $ z (z \in \mathbb{Z}_{N(\alpha)})$ with respect to modulo functions given in \hyperref[equ4]{eq. \ref{equ4}} and \hyperref[equ5]{eq. \ref{equ5}}, respectively. The set of quotient ring of the Hurwitz integers that denoted by $ \mathcal{H}_{\alpha}$ is shown by
\begin{equation}\label{equ7}
\mathcal{H}_{\alpha}= \{ \mu_{\alpha}(z) =  \mathrm{min} \{ \mu_{\alpha}^{(1)}(z), \mu_{\alpha}^{(2)}(z) \} | z \in \mathbb{Z}_{N(\pi)} \}.
\end{equation}
\end{definition}

\begin{proposition} \label{prop1}
Let $\alpha$ be a prime Hurwitz integer, and $z=0.$ Then
\begin{itemize}
  \item[i.] $\mu_{\alpha}^{(1)}(0)=0,$
  \item[ii.] $\mu_{\alpha}^{(2)}(0) \equiv 0 \mod \alpha.$
\end{itemize}
\end{proposition}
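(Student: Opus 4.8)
The plan is to treat the two parts separately, in each case substituting $z=0$ into the relevant defining formula and then invoking the appropriate rounding convention from Section~\ref{sec2}. The entire argument reduces to understanding how the two rounding operators $\lfloor\cdot\rceil$ and $\lfloor\lfloor\cdot\rceil\rceil$ act on the zero quaternion, because the common factor $\overline{\alpha}\,z$ collapses to $0$ when $z=0$.

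For part~(i) I would begin from \hyperref[equ4]{eq.~\ref{equ4}}. With $z=0$ the argument of the rounding operator is $\tfrac{\overline{\alpha}\cdot 0}{N(\alpha)}=0$, the zero quaternion. By \hyperref[def8]{Definition~\ref{def8}}, rounding each component of $0$ to the nearest integer returns $0$ itself, since $0\in\mathbb{Z}$. Hence $\lfloor 0\rceil=0$, the subtracted term $\alpha\lfloor 0\rceil$ vanishes, and $\mu_{\alpha}^{(1)}(0)=0-\alpha\cdot 0=0$ exactly. This part is immediate.

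For part~(ii) I would start from \hyperref[equ5]{eq.~\ref{equ5}}, where the argument is again the zero quaternion. The essential difference is that $0$ is not a half-integer, so the operator $\lfloor\lfloor\cdot\rceil\rceil$ does not fix it. By \hyperref[def9]{Definition~\ref{def9}} together with the tie-breaking convention stated just after it (a zero component rounds upward to $+\tfrac12$), and as recorded in \hyperref[ex2]{Example~\ref{ex2}}(iii), we get $\lfloor\lfloor 0\rceil\rceil=\omega$ where $\omega=\tfrac12+\tfrac12 i+\tfrac12 j+\tfrac12 k$. Crucially $\omega\in\mathcal{H}$ (in fact $\omega$ is a Hurwitz unit of norm $1$), since all four components lie in $\mathbb{Z}+\tfrac12$. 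Substituting yields $\mu_{\alpha}^{(2)}(0)=0-\alpha\omega=\alpha(-\omega)$, and because $-\omega\in\mathcal{H}$ this value is a left multiple of $\alpha$; under the left congruence used throughout it therefore satisfies $\mu_{\alpha}^{(2)}(0)\equiv 0\bmod\alpha$, which is the left analogue of the ideal membership in \hyperref[def7]{Definition~\ref{def7}}.

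The only delicate point, and the whole reason the two conclusions differ ($=0$ versus merely $\equiv 0$), is the asymmetric behaviour of the two operators at the origin: the nearest-integer rounding fixes $0$, whereas the nearest-half-integer rounding displaces it to the unit $\omega$. I would take care to derive $\lfloor\lfloor 0\rceil\rceil=\omega$ explicitly from the stated convention rather than treat it as self-evident, since this single evaluation drives the proposition; every remaining step is routine substitution.
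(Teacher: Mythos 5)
Your proposal is correct and follows essentially the same route as the paper: both parts reduce to evaluating the rounding operators at the zero quaternion, with $\lfloor 0\rceil=0$ giving part~(i) immediately and $\lfloor\lfloor 0\rceil\rceil=\tfrac12+\tfrac12 i+\tfrac12 j+\tfrac12 k$ giving $\mu_{\alpha}^{(2)}(0)=-\alpha\bigl(\tfrac12+\tfrac12 i+\tfrac12 j+\tfrac12 k\bigr)\equiv 0 \bmod \alpha$ for part~(ii). The only difference is presentational: the paper expands the product $-\alpha\bigl(\tfrac12+\tfrac12 i+\tfrac12 j+\tfrac12 k\bigr)$ component by component before asserting the congruence, whereas you conclude directly from the observation that $\tfrac12+\tfrac12 i+\tfrac12 j+\tfrac12 k\in\mathcal{H}$, so the value lies in $\alpha\mathcal{H}$ --- which is, if anything, a cleaner justification of the paper's final step.
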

\begin{proof} Let $\alpha=\alpha_{1}+\alpha_{2}i+\alpha_{3}j+\alpha_{4}k$ be a prime Hurwitz integer,  and $z=0.$

$(i.)$ The proof can be easily seen from \hyperref[equ4]{eq. \ref{equ4}}.

$(ii.)$ From \hyperref[equ4]{eq. \ref{equ5}},
\begin{equation}\label{equu8}
\begin{array}{lll}
  \mu_{\alpha}^{(2)}(0) & = & 0 - \alpha \lfloor \lfloor \frac{\overline{\alpha} 0}{N(\alpha)} \rceil \rceil \\
   & = & - (\alpha_{1}+\alpha_{2}i+\alpha_{3}j+\alpha_{4}k) \lfloor \lfloor \frac{(\alpha_{1}-\alpha_{2}i-\alpha_{3}j-\alpha_{4}k) 0}{N(\alpha)} \rceil \rceil \\
   & = & - (\alpha_{1}+\alpha_{2}i+\alpha_{3}j+\alpha_{4}k) \lfloor \lfloor 0 \rceil \rceil \\
   & = & - (\alpha_{1}+\alpha_{2}i+\alpha_{3}j+\alpha_{4}k)(\frac{1}{2} + \frac{1}{2}i + \frac{1}{2}j + \frac{1}{2}k) \\
   & = & -\frac{\alpha_{1}}{2} - \frac{\alpha_{1}}{2}i - \frac{\alpha_{1}}{2}j - \frac{\alpha_{1}}{2}k - \frac{\alpha_{2}}{2}i + \frac{\alpha_{2}}{2} - \frac{\alpha_{2}}{2}k + \frac{\alpha_{2}}{2}j \\
   &  & - \frac{\alpha_{3}}{2}j + \frac{\alpha_{3}}{2}k + \frac{\alpha_{3}}{2} - \frac{\alpha_{3}}{2}i - \frac{\alpha_{4}}{2}k - \frac{\alpha_{4}}{2}j + \frac{\alpha_{4}}{2}i + \frac{\alpha_{4}}{2} \\
   & = & \frac{-\alpha_{1}+\alpha_{2}+\alpha_{3}+\alpha_{4}}{2} + \frac{-\alpha_{1} - \alpha_{2} - \alpha_{3} + \alpha_{4}}{2}i + \frac{-\alpha_{1} + \alpha_{2} - \alpha_{3} - \alpha_{4}}{2}j + \frac{-\alpha_{1} - \alpha_{2} + \alpha_{3} - \alpha_{4}}{2}k \\
   & \equiv & 0 \mod \alpha.
\end{array}
\end{equation}
This completes the proof.
\end{proof}

The set $\mathcal{H}_{\alpha} $ contains $N(\alpha)$ elements. The modulo function $\mu$ in \hyperref[def9]{definition \ref{def9}} defines a bijective mapping from $\mathbb{Z}_{N(\alpha)} $ into $\mathcal{H}_{\alpha}.$ In other words, there is a ring isomorphism between $\mathbb{Z}_{N(\alpha)} $ and $\mathcal{H}_{\alpha}.$ With the following theorems, we first show that the modulo function $\mu$ in \hyperref[def9]{definition \ref{def9}} is a ring homomorphism, and then we show that it is a ring isomorphism.

\begin{theorem}\label{thm1}
Let $\alpha$ be a prime Hurwitz integer, and $ z_{1},z_{2} \in \mathbb{Z}_{N(\alpha)}.$ The modulo function $\mu: \mathbb{Z}_{N(\alpha)} \rightarrow \mathcal{H}_{\alpha}$ is a ring homomorphism.
\end{theorem}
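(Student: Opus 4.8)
The plan is to reduce the entire statement to a single congruence: that $\mu_\alpha(z)$ always lies in the same residue class as $z$ modulo $\alpha$. Once this is established, both the additive and the multiplicative homomorphism identities fall out from the fact that the residue classes form the ring $\mathcal{H}_{\alpha}=\mathcal{H}/\langle\alpha\rangle$ of Definition~\ref{def7}, so that I never have to analyse the piecewise minimum in \eqref{equ6} beyond knowing which class its value lands in.

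First I would prove the auxiliary claim that $\mu_{\alpha}^{(1)}(z)\equiv z$ and $\mu_{\alpha}^{(2)}(z)\equiv z\pmod\alpha$, whence $\mu_{\alpha}(z)\equiv z\pmod\alpha$ irrespective of which branch of \eqref{equ6} is selected. From \eqref{equ4}, $z-\mu_{\alpha}^{(1)}(z)=\alpha\lfloor \overline{\alpha}z/N(\alpha)\rceil$, and since rounding each component to the nearest integer yields a quaternion with all components in $\mathbb{Z}$ — a Hurwitz integer by the $\mathcal{H}(\mathbb{Z})$ branch of Definition~\ref{def2} — this difference is a left multiple $\alpha\lambda$ with $\lambda\in\mathcal{H}$, so $z\equiv\mu_{\alpha}^{(1)}(z)\pmod\alpha$. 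From \eqref{equ5}, $z-\mu_{\alpha}^{(2)}(z)=\alpha\lfloor\lfloor \overline{\alpha}z/N(\alpha)\rceil\rceil$; here the correction quaternion has all components in $\mathbb{Z}+\tfrac12$, which is again a Hurwitz integer by the $\mathcal{H}(\mathbb{Z}+\tfrac12)$ branch of Definition~\ref{def2}, so the same conclusion holds. Proposition~\ref{prop1}(ii) is exactly the $z=0$ instance of this computation and serves as a template. I expect this verification — that the half-integer rounding produces a genuine element of $\mathcal{H}$ rather than a stray rational quaternion — to be the one genuinely $\mathcal{H}$-specific step, and hence the crux.

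Granting $\mu_{\alpha}(z)\equiv z$, write $\mu_{\alpha}(z)=z+\alpha\lambda_{z}$ with $\lambda_{z}\in\mathcal{H}$. Additivity is immediate because congruence modulo $\alpha$ respects addition:
\[
\mu_{\alpha}(z_{1})+\mu_{\alpha}(z_{2})\;\equiv\;z_{1}+z_{2}\;\equiv\;\mu_{\alpha}(z_{1}+z_{2})\pmod\alpha .
\]
For multiplicativity I would expand, using the centrality of the rational integer $z_{1}$ in $\mathcal{H}$,
\[
\mu_{\alpha}(z_{1})\,\mu_{\alpha}(z_{2})=z_{1}z_{2}+z_{1}\alpha\lambda_{z_{2}}+\alpha\lambda_{z_{1}}z_{2}+\alpha\lambda_{z_{1}}\alpha\lambda_{z_{2}} .
\]
Each of the last three terms is a left multiple of $\alpha$, hence $\equiv 0\pmod\alpha$: the first only after rewriting $z_{1}\alpha=\alpha z_{1}$ via centrality, the other two outright. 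Therefore $\mu_{\alpha}(z_{1})\mu_{\alpha}(z_{2})\equiv z_{1}z_{2}\equiv\mu_{\alpha}(z_{1}z_{2})\pmod\alpha$, and since $\mu_{\alpha}(1)\equiv 1$ is the identity of $\mathcal{H}_{\alpha}$, the map is a unital ring homomorphism.

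Two bookkeeping points I would flag. First, $\mu$ must be well defined on $\mathbb{Z}_{N(\alpha)}$: since $N(\alpha)=\alpha\overline{\alpha}$ with $\overline{\alpha}\in\mathcal{H}$, we have $N(\alpha)\equiv 0\pmod\alpha$, so reducing $z_{1}+z_{2}$ or $z_{1}z_{2}$ modulo $N(\alpha)$ in the domain is compatible with the congruences above in $\mathcal{H}_{\alpha}$. Second, the main obstacle is not any individual identity but the interaction between the branch-wise definition of $\mu_{\alpha}$ and the non-commutativity of $\mathcal{H}$: I must confirm at the outset that the branch chosen in \eqref{equ6} never affects the residue class — which is precisely why the auxiliary claim is proved for $\mu_{\alpha}^{(1)}$ and $\mu_{\alpha}^{(2)}$ separately — and I must invoke the centrality of $\mathbb{Z}$ in $\mathcal{H}$ to move left multiples of $\alpha$ through products, without which multiplicativity of the congruence would fail.
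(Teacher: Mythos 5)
Your proposal is correct, and it reaches the paper's conclusion by a cleaner decomposition than the paper's own proof. The paper fixes the pair $(z_{1},z_{2})$ and splits into four cases according to which branch of \eqref{equ6} each of $\mu_{\alpha}(z_{1})$, $\mu_{\alpha}(z_{2})$ selects; in each case it expands the sum (resp.\ product), absorbs the rounding terms into some $\lambda\in\mathcal{H}$, writes the result as $z_{1}+z_{2}-\alpha\lambda$ (resp.\ $z_{1}z_{2}-\alpha\lambda$), and identifies it with $\mu_{\alpha}(z_{1}+z_{2})$ (resp.\ $\mu_{\alpha}(z_{1}z_{2})$) on the grounds that the latter also has the form $z_{1}+z_{2}-\alpha\beta$ with $\beta\in\mathcal{H}$ --- i.e.\ the two agree as residue classes. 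You instead prove a single per-element lemma, $\mu_{\alpha}^{(1)}(z)\equiv z$ and $\mu_{\alpha}^{(2)}(z)\equiv z \pmod{\alpha}$, and then let the quotient structure of $\mathcal{H}_{\alpha}$ do the rest; this eliminates the four-way case analysis entirely (the paper only writes out two of its four cases per operation and waves at the rest with ``similarly''), since whichever branch the minimum picks, the value stays in the class of $z$. Your route also makes explicit three points the paper leaves tacit: that the outputs of $\lfloor\cdot\rceil$ and $\lfloor\lfloor\cdot\rceil\rceil$ are genuine Hurwitz integers (landing in the $\mathcal{H}(\mathbb{Z})$ and $\mathcal{H}(\mathbb{Z}+\frac{1}{2})$ branches of Definition~\ref{def2} respectively), which is exactly what licenses the substitution $\lambda_{i}\in\mathcal{H}$; that rewriting $z_{1}\alpha\lambda_{2}$ as $\alpha z_{1}\lambda_{2}$ requires the centrality of rational integers in $\mathcal{H}$ (the paper performs this silently when it factors $\alpha$ out on the left in \eqref{equ21} and \eqref{equ26}); and that reduction modulo $N(\alpha)$ in the domain is compatible with congruence modulo $\alpha$ because $N(\alpha)=\alpha\overline{\alpha}\equiv 0$. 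What the paper's version buys in exchange is concreteness --- fully written-out representatives in each case --- but the mathematical mechanism (equality in $\mathcal{H}_{\alpha}$ is left congruence modulo $\alpha$, and every correction term is $\alpha$ times an element of $\mathcal{H}$) is the same in both arguments; yours simply promotes it to the organizing principle.
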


\begin{proof}
Let $\alpha$ be a prime Hurwitz integer. The modulo function $ \mu: \mathbb{Z}_{N(\alpha)} \rightarrow \mathcal{H}_{\alpha} $ by
\begin{equation}\label{equ8}
\mu_{\alpha}(z) = \mathrm{min} \{ \mu_{\alpha}^{(1)}(z), \mu_{\alpha}^{(2)}(z) \}
\end{equation}
where
\begin{equation}\label{equ9}
\mu_{\alpha}^{(1)}(z) = z \mod \alpha = z - \alpha \lfloor \frac{\overline{\alpha} z}{N(\alpha)} \rceil
\end{equation} and
\begin{equation}\label{equ10}
\mu_{\alpha}^{(2)}(z) = z \mod \alpha = z - \alpha \lfloor \lfloor \frac{\overline{\alpha} z}{N(\alpha)} \rceil \rceil
\end{equation} such that
\begin{equation}\label{equ11}
    \mathrm{min} \{ \mu_{\alpha}^{(1)}(z), \mu_{\alpha}^{(2)}(z) \} = \begin{cases}
                                                                  \mu_{\alpha}^{(1)}(z), & \mbox{if } N(\mu_{\alpha}^{(1)}(z)) < N(\mu_{\alpha}^{(2)}(z)) \\
                                                                  \mu_{\alpha}^{(1)}(z), & \mbox{if } N(\mu_{\alpha}^{(1)}(z)) = N(\mu_{\alpha}^{(2)}(z)) \mbox{ where z is an even integer} \\
                                                                  \mu_{\alpha}^{(2)}(z), & \mbox{otherwise}.
                                                                \end{cases}
\end{equation}
where $z \in \mathbb{Z}_{N(\alpha)}.$ From \hyperref[equ8]{eq. \ref{equ8}}, $\mu_{\alpha}(z_{1}) = \mathrm{min} \{ \mu_{\alpha}^{(1)}(z_{1}), \mu_{\alpha}^{(2)}(z_{1}) \},$ and $\mu_{\alpha}(z_{2}) = \mathrm{min} \{ \mu_{\alpha}^{(1)}(z_{2}), \mu_{\alpha}^{(2)}(z_{2}) \}$ where $z_{1},z_{2} \in \mathbb{Z}_{N(\alpha)}.$ If $z_{1},z_{2} \in \mathbb{Z}_{N(\alpha)},$ then there are four probable case with respect to \hyperref[equ8]{eq. \ref{equ8}};
\begin{itemize}
  \item[i.] $\mu_{\alpha}(z_{1})=\mu^{(1)}_{\alpha}(z_{1})$ and $\mu_{\alpha}(z_{2})=\mu^{(1)}_{\alpha}(z_{2})$ or,
  \item[ii.] $\mu_{\alpha}(z_{1})=\mu^{(1)}_{\alpha}(z_{1})$ and $\mu_{\alpha}(z_{2})=\mu^{(2)}_{\alpha}(z_{2})$ or,
  \item[iii.] $\mu_{\alpha}(z_{1})=\mu^{(2)}_{\alpha}(z_{1})$ and $\mu_{\alpha}(z_{2})=\mu^{(1)}_{\alpha}(z_{2})$ or,
  \item[iv.] $\mu_{\alpha}(z_{1})=\mu^{(2)}_{\alpha}(z_{1})$ and $\mu_{\alpha}(z_{2})=\mu^{(2)}_{\alpha}(z_{2}).$
\end{itemize}
Let us show that $\mu_{\alpha}(z_{1}+z_{2}) = \mu_{\alpha}(z_{1}) + \mu_{\alpha}(z_{2}).$

$(i.)$ Let $\mu_{\alpha}(z_{1})=\mu^{(1)}_{\alpha}(z_{1})$ and $\mu_{\alpha}(z_{2})=\mu^{(1)}_{\alpha}(z_{2}).$ Then $\mu_{\alpha}(z_{1}+z_{2})=\mu^{(1)}_{\alpha}(z_{1}) + \mu^{(1)}_{\alpha}(z_{2}).$  From \hyperref[equ9]{eq. \ref{equ9}},
\begin{equation}\label{equ12}
\begin{array}{lll}
  \mu^{(1)}_{\alpha}(z_{1}) + \mu^{(1)}_{\alpha}(z_{2}) & = & z_{1} - \alpha \lfloor \frac{\overline \alpha z_{1} }{N(\alpha)} \rceil + z_{2} - \alpha \lfloor \frac{\overline \alpha z_{2} }{N(\alpha)}\rceil \\
   & = & z_{1} + z_{2} - \alpha ( \lfloor \frac{\overline \alpha z_{1} }{N(\alpha)} \rceil + \lfloor \frac{\overline \alpha z_{2} }{N(\alpha)}\rceil ).
\end{array}
\end{equation}
There exist $\lambda_{1},\lambda_{2} \in \mathcal{H}$ such that $\lfloor \frac{\overline \alpha z_{1} }{N(\alpha)}\rceil=\lambda_{1}$ and  $\lfloor \frac{\overline \alpha z_{2} }{N(\alpha)}\rceil=\lambda_{2}.$ Hereby,
\begin{equation}\label{equ13}
\begin{array}{lll}
  \mu^{(1)}_{\alpha}(z_{1}) + \mu^{(1)}_{\alpha}(z_{2}) & = & z_{1} + z_{2} - \alpha (  \lambda_{1} + \lambda_{2} ).
\end{array}
\end{equation}
Let $\lambda_{1} + \lambda_{2} =\lambda$ where $\lambda \in \mathcal{H}.$ Thereby,
\begin{equation}\label{equ14}
\begin{array}{lll}
  \mu^{(1)}_{\alpha}(z_{1}) + \mu^{(1)}_{\alpha}(z_{2}) & = & z_{1} + z_{2} - \alpha \lambda.
\end{array}
\end{equation}
Since $\mu_{\alpha}(z_{1}+z_{2}) = (z_{1} + z_{2}) \mod \alpha,$ then there exists $\exists \beta \in \mathcal{H}$ such that $ \mu_{\alpha}(z_{1}+z_{2}) = z_{1} + z_{2} - \alpha \beta .$ So, we have
\begin{equation}\label{equ15}
\begin{array}{lll}
\mu^{(1)}_{\alpha}(z_{1}) + \mu^{(1)}_{\alpha}(z_{2}) & = & \mu_{\alpha}(z_{1}+z_{2}).
\end{array}
\end{equation}

$(ii.)$ Let $\mu_{\alpha}(z_{1})=\mu^{(1)}_{\alpha}(z_{1})$ and $\mu_{\alpha}(z_{2})=\mu^{(2)}_{\alpha}(z_{2}).$ Then $\mu_{\alpha}(z_{1}+z_{2})=\mu^{(1)}_{\alpha}(z_{1}) + \mu^{(2)}_{\alpha}(z_{2}).$  From \hyperref[equ9]{eq. \ref{equ9}}, and \hyperref[equ10]{eq. \ref{equ10}},
\begin{equation}\label{equ16}
\begin{array}{lll}
  \mu^{(1)}_{\alpha}(z_{1}) + \mu^{(2)}_{\alpha}(z_{2}) & = & z_{1} - \alpha \lfloor \frac{\overline \alpha z_{1} }{N(\alpha)} \rceil + z_{2} - \alpha \lfloor \lfloor \frac{\overline \alpha z_{2} }{N(\alpha)}\rceil \rceil \\
   & = & z_{1} + z_{2} - \alpha ( \lfloor \frac{\overline \alpha z_{1} }{N(\alpha)} \rceil + \lfloor\lfloor \frac{\overline \alpha z_{2} }{N(\alpha)}\rceil \rceil ).
\end{array}
\end{equation}
There exist $\lambda_{1},\lambda_{2} \in \mathcal{H}$ such that $\lfloor \frac{\overline \alpha z_{1} }{N(\alpha)}\rceil=\lambda_{1}$ and  $\lfloor \lfloor \frac{\overline \alpha z_{2} }{N(\alpha)}\rceil \rceil = \lambda_{2}.$ Hereby,
\begin{equation}\label{equ17}
\begin{array}{lll}
  \mu^{(1)}_{\alpha}(z_{1}) + \mu^{(2)}_{\alpha}(z_{2}) & = & z_{1} + z_{2} - \alpha (  \lambda_{1} + \lambda_{2} ).
\end{array}
\end{equation}
Let $\lambda_{1} + \lambda_{2} =\lambda$ where $\lambda \in \mathcal{H}.$ Thereby,
\begin{equation}\label{equ18}
\begin{array}{lll}
  \mu^{(1)}_{\alpha}(z_{1}) + \mu^{(2)}_{\alpha}(z_{2}) & = & z_{1} + z_{2} - \alpha \lambda.
\end{array}
\end{equation}
Since $\mu_{\alpha}(z_{1}+z_{2}) = (z_{1} + z_{2}) \mod \alpha,$ then there exists $\exists \beta \in \mathcal{H}$ such that $ \mu_{\alpha}(z_{1}+z_{2}) = z_{1} + z_{2} - \alpha \beta.$ So, we have
\begin{equation}\label{equ19}
\begin{array}{lll}
\mu^{(1)}_{\alpha}(z_{1}) + \mu^{(2)}_{\alpha}(z_{2}) & = & \mu_{\alpha}(z_{1}+z_{2}).
\end{array}
\end{equation}
Similarly, we can also show cases in the $(iii.)$ and $(iv.).$ Consequently,
\begin{equation}\label{}
\mu_{\alpha}(z_{1}+z_{2}) = \mu_{\alpha}(z_{1}) + \mu_{\alpha}(z_{2}).
\end{equation}
On the other hand, let us show that $\mu_{\alpha}(z_{1}z_{2}) = \mu_{\alpha}(z_{1})\mu_{\alpha}(z_{2}).$

$(i.)$ Let $\mu_{\alpha}(z_{1})=\mu^{(1)}_{\alpha}(z_{1})$ and $\mu_{\alpha}(z_{2})=\mu^{(1)}_{\alpha}(z_{2}).$ Then $\mu_{\alpha}(z_{1}z_{2})=\mu^{(1)}_{\alpha}(z_{1}) \mu^{(1)}_{\alpha}(z_{2}).$  From \hyperref[equ9]{eq. \ref{equ9}},
\begin{equation}\label{equ20}
\begin{array}{lll}
  \mu^{(1)}_{\alpha}(z_{1}) \mu^{(1)}_{\alpha}(z_{2}) & = & (z_{1} - \alpha \lfloor \frac{\overline \alpha z_{1} }{N(\alpha)} \rceil) ( z_{2} - \alpha \lfloor \frac{\overline \alpha z_{2} }{N(\alpha)}\rceil ) \\
   & = & z_{1} z_{2} - z_{1} \alpha \lfloor \frac{\overline \alpha z_{2} }{N(\alpha)}\rceil - z_{2}  \alpha \lfloor \frac{\overline \alpha z_{1} }{N(\alpha)} \rceil + \alpha \lfloor \frac{\overline \alpha z_{1} }{N(\alpha)} \rceil \alpha \lfloor \frac{\overline \alpha z_{2} }{N(\alpha)}\rceil.
\end{array}
\end{equation}
There exist $\lambda_{1},\lambda_{2} \in \mathcal{H}$ such that $\lfloor \frac{\overline \alpha z_{1} }{N(\alpha)}\rceil=\lambda_{1}$ and  $\lfloor \frac{\overline \alpha z_{2} }{N(\alpha)}\rceil=\lambda_{2}.$ Hereby,
\begin{equation}\label{equ21}
\begin{array}{lll}
  \mu^{(1)}_{\alpha}(z_{1}) \mu^{(1)}_{\alpha}(z_{2}) & = & z_{1} z_{2} - z_{1} \alpha \lambda_{2} - z_{2} \alpha \lambda_{1} + \alpha \lambda_{1} \alpha \lambda_{2}\\
                                                      & = & z_{1} z_{2} - \alpha (z_{1} \lambda_{2} + z_{2} \lambda_{1} + \lambda_{1} \alpha \lambda_{2}).
\end{array}
\end{equation}
Let $z_{1} \lambda_{2} + z_{2} \lambda_{1} + \lambda_{1} \alpha \lambda_{2} =\lambda$ where $\lambda \in \mathcal{H}.$ Thereby,
\begin{equation}\label{equ22}
\begin{array}{lll}
  \mu^{(1)}_{\alpha}(z_{1}) \mu^{(1)}_{\alpha}(z_{2}) & = & z_{1} z_{2} - \alpha \lambda.
\end{array}
\end{equation}
Since $\mu_{\alpha}(z_{1}z_{2}) = (z_{1} z_{2}) \mod \alpha,$ then there exists $ \exists \beta \in \mathcal{H}$ such that $ \mu_{\alpha}(z_{1}z_{2}) = z_{1} z_{2} - \alpha \beta.$ So, we have
\begin{equation}\label{equ23}
\begin{array}{lll}
\mu^{(1)}_{\alpha}(z_{1}) \mu^{(1)}_{\alpha}(z_{2}) & = & \mu_{\alpha}(z_{1}z_{2}).
\end{array}
\end{equation}
$(ii.)$ Let $\mu_{\alpha}(z_{1})=\mu^{(1)}_{\alpha}(z_{1})$ and $\mu_{\alpha}(z_{2})=\mu^{(2)}_{\alpha}(z_{2}).$ Then $\mu_{\alpha}(z_{1}z_{2})=\mu^{(1)}_{\alpha}(z_{1}) \mu^{(2)}_{\alpha}(z_{2}).$  From \hyperref[equ9]{eq. \ref{equ9}}, and \hyperref[equ10]{eq. \ref{equ10}},
\begin{equation}\label{equ25}
\begin{array}{lll}
  \mu^{(1)}_{\alpha}(z_{1}) \mu^{(2)}_{\alpha}(z_{2}) & = & (z_{1} - \alpha \lfloor \frac{\overline \alpha z_{1} }{N(\alpha)} \rceil) ( z_{2} - \alpha \lfloor \lfloor \frac{\overline \alpha z_{2} }{N(\alpha)}\rceil \rceil) \\
                                                      & = & z_{1} z_{2} - z_{1} \alpha \lfloor \lfloor \frac{\overline \alpha z_{2} }{N(\alpha)}\rceil \rceil - z_{2} \alpha \lfloor \frac{\overline \alpha z_{1} }{N(\alpha)} \rceil + \alpha \lfloor \frac{\overline \alpha z_{1} }{N(\alpha)} \rceil \alpha \lfloor \lfloor \frac{\overline \alpha z_{2} }{N(\alpha)}\rceil \rceil.
\end{array}
\end{equation}
There exist $\lambda_{1},\lambda_{2} \in \mathcal{H}$ such that $\lfloor \frac{\overline \alpha z_{1} }{N(\alpha)}\rceil=\lambda_{1}$ and  $\lfloor \lfloor \frac{\overline \alpha z_{2} }{N(\alpha)}\rceil \rceil = \lambda_{2}.$ Hereby,
\begin{equation}\label{equ26}
\begin{array}{lll}
  \mu^{(1)}_{\alpha}(z_{1}) \mu^{(2)}_{\alpha}(z_{2}) & = & z_{1} z_{2} - z_{1} \alpha \lambda_{2} - z_{2} \alpha \lambda_{1} + \alpha \lambda_{1} \alpha \lambda_{2} \\
                                                      & = & z_{1} z_{2} - \alpha (z_{1} \lambda_{2} + z_{2} \lambda_{1} + \lambda_{1} \alpha \lambda_{2}).
\end{array}
\end{equation}
Let $z_{1} \lambda_{2} + z_{2} \lambda_{1} + \lambda_{1} \alpha \lambda_{2} =\lambda$ where $\lambda \in \mathcal{H}.$ Thereby,
\begin{equation}\label{equ27}
\begin{array}{lll}
  \mu^{(1)}_{\alpha}(z_{1}) \mu^{(2)}_{\alpha}(z_{2}) & = & z_{1} z_{2} - \alpha \lambda.
\end{array}
\end{equation}
Since $\mu_{\alpha}(z_{1}z_{2}) = (z_{1} z_{2}) \mod \alpha,$ then there exists $\exists \beta \in \mathcal{H}$ such that $ \mu_{\alpha}(z_{1}z_{2}) = z_{1} z_{2} - \alpha \beta.$ So, we have
\begin{equation}\label{equ28}
\begin{array}{lll}
\mu^{(1)}_{\alpha}(z_{1}) \mu^{(2)}_{\alpha}(z_{2}) & = & \mu_{\alpha}(z_{1}z_{2}).
\end{array}
\end{equation}
Similarly, we can also show cases in the $(iii.)$ and $(iv.).$ Consequently,
\begin{equation}\label{}
\mu_{\alpha}(z_{1}z_{2}) = \mu_{\alpha}(z_{1}) \mu_{\alpha}(z_{2}).
\end{equation}
Namely, the modulo function $\mu: \mathbb{Z}_{N(\alpha)} \rightarrow \mathcal{H}_{\alpha}$ is a ring homomorphism. This completes this proof.
\end{proof}

\begin{theorem}\label{thm2}
The modulo function $\mu: \mathbb{Z}_{N(\alpha)} \rightarrow \mathcal{H}_{\alpha}$ is a ring isomorphism ring. Namely, $\mathbb{Z}_{N(\alpha)} \cong \mathcal{H}_{\alpha}.$
\end{theorem}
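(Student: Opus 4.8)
The plan is to derive the isomorphism directly from the homomorphism already established in Theorem \ref{thm1}. Once $\mu$ is known to be a ring homomorphism, the assertion $\mathbb{Z}_{N(\alpha)} \cong \mathcal{H}_\alpha$ reduces to showing that $\mu$ is bijective. Surjectivity is immediate, because $\mathcal{H}_\alpha$ is defined in \eqref{equ7} precisely as the image $\{\mu_\alpha(z) : z \in \mathbb{Z}_{N(\alpha)}\}$. Moreover $|\mathbb{Z}_{N(\alpha)}| = N(\alpha)$, so if I can prove that $\mu$ is injective, then $\mu$ is automatically a bijection and, in particular, $\mathcal{H}_\alpha$ has exactly $N(\alpha)$ elements. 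Hence the whole theorem comes down to injectivity of $\mu$.

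To prove injectivity I would argue on cosets. By construction both $\mu^{(1)}_\alpha(z)$ and $\mu^{(2)}_\alpha(z)$ are congruent to $z$ modulo $\alpha$, so whichever branch of the $\min$ in \eqref{equ11} is selected, one always has $\mu_\alpha(z) \equiv z \pmod \alpha$. Consequently, if $\mu_\alpha(z_1) = \mu_\alpha(z_2)$ for $z_1, z_2 \in \mathbb{Z}_{N(\alpha)}$, then $z_1 - z_2 \equiv 0 \pmod \alpha$, i.e. $\alpha$ left-divides the ordinary integer $z_1 - z_2$ in $\mathcal{H}$.

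Writing $z_1 - z_2 = \lambda\alpha$ with $\lambda \in \mathcal{H}$ and applying the multiplicative Hurwitz norm gives $N(z_1 - z_2) = N(\lambda) N(\alpha)$. Since $z_1 - z_2$ is a rational integer, $N(z_1-z_2) = (z_1-z_2)^2$, so the prime $p = N(\alpha)$ divides $(z_1 - z_2)^2$ and therefore divides $z_1 - z_2$. As $z_1, z_2$ range over a complete residue system modulo $N(\alpha) = p$, this forces $z_1 = z_2$ in $\mathbb{Z}_{N(\alpha)}$, proving injectivity. An alternative, slicker route is to observe that because $N(\alpha)$ is prime the ring $\mathbb{Z}_{N(\alpha)}$ is the field $\mathbb{F}_{N(\alpha)}$; the kernel of the homomorphism $\mu$ is then an ideal of a field, hence either $\{0\}$ or all of $\mathbb{Z}_{N(\alpha)}$, and it cannot be the whole ring since $\mu(1) \neq 0$ (as $\alpha$ is not a unit), so the kernel is trivial and $\mu$ is injective.

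I expect the main obstacle to lie in the norm/divisibility step, namely in justifying carefully that $\alpha \mid m$ in $\mathcal{H}$ for a rational integer $m$ forces $N(\alpha) \mid m$. This rests on the multiplicativity of the Hurwitz norm together with the primality of $N(\alpha)$, and on treating $m$ correctly as the quaternion $m + 0i + 0j + 0k$ whose norm is $m^2$. The $\min$-selection in the definition of $\mu$ is a minor technical nuisance rather than a genuine difficulty: it only affects which representative of a coset is chosen, not the coset itself, so it plays no role in the congruence $\mu_\alpha(z) \equiv z \pmod\alpha$ on which the argument hinges. Combining injectivity with the surjectivity and cardinality observations above, and with the homomorphism property of Theorem \ref{thm1}, yields that $\mu$ is a ring isomorphism and hence $\mathbb{Z}_{N(\alpha)} \cong \mathcal{H}_\alpha$.
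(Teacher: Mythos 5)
Your proposal is correct, and its skeleton --- homomorphism from Theorem \ref{thm1}, surjectivity read off from the definition of $\mathcal{H}_\alpha$ as the image of $\mu$, injectivity via a trivial kernel --- is the same as the paper's. The difference lies in the one step that carries real content. The paper's injectivity argument consists of the assertions that $\mu_\alpha(0)=0$ and that ``if $z\neq 0$, then $\mu_\alpha(z)$ is greater than or equal to $1$,'' the latter given with no justification at all. You actually prove it: from the observation that both branches of the $\min$ satisfy $\mu_\alpha(z)\equiv z \pmod{\alpha}$, you reduce injectivity to the claim that if $\alpha$ divides a rational integer $m$ in $\mathcal{H}$, then $N(\alpha)\mid m$, and you obtain this from multiplicativity of the quaternion norm ($m^2=N(m)=N(\lambda)N(\alpha)$) together with the primality of $p=N(\alpha)$. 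This norm--divisibility lemma is precisely what is missing from the paper's proof, so your write-up is strictly more complete than the original; it is also the cleanest way to see why the set $\mathcal{H}_\alpha$ really has $N(\alpha)$ distinct elements rather than fewer. Your alternative route --- $\mathbb{Z}_{N(\alpha)}$ is a field, the kernel is an ideal, and $\mu(1)\neq 0$ since $\alpha$ is not a unit --- is equally valid and even shorter, though it leans on $\mathcal{H}_\alpha$ being given a bona fide ring structure, which the paper takes for granted. One small notational slip: with the paper's convention (left congruence, $z-\mu_\alpha(z)=\alpha\lambda$), the difference of two integers with equal residues satisfies $z_1-z_2=\alpha\lambda$, not $\lambda\alpha$; this is immaterial for your argument, since the norm is multiplicative regardless of the side on which $\alpha$ appears.
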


\begin{proof}
Let $\alpha$ be a prime Hurwitz integer and, $z_{1},z_{2} \in \mathbb{Z}_{N(\alpha)}.$ The modulo function $\mu : \mathbb{Z}_{N(\alpha)} \rightarrow \mathcal{H}_\alpha :$
\begin{equation}\label{equ29}
\mu_{\alpha}(z) = \mathrm{min} \{ \mu_{\alpha}^{(1)}(z), \mu_{\alpha}^{(2)}(z) \}
\end{equation}
where
\begin{equation}\label{equ30}
\mu_{\alpha}^{(1)}(z) = z \mod \alpha = z - \alpha \lfloor \frac{\overline{\alpha} z}{N(\alpha)} \rceil
\end{equation} and
\begin{equation}\label{equ31}
\mu_{\alpha}^{(2)}(z) = z \mod \alpha = z - \alpha \lfloor \lfloor \frac{\overline{\alpha} z}{N(\alpha)} \rceil \rceil
\end{equation} such that
\begin{equation}\label{equ32}
    \mathrm{min} \{ \mu_{\alpha}^{(1)}(z), \mu_{\alpha}^{(2)}(z) \} = \begin{cases}
                                                                  \mu_{\alpha}^{(1)}(z), & \mbox{if } N(\mu_{\alpha}^{(1)}(z)) < N(\mu_{\alpha}^{(2)}(z)) \\
                                                                  \mu_{\alpha}^{(1)}(z), & \mbox{if } N(\mu_{\alpha}^{(1)}(z)) = N(\mu_{\alpha}^{(2)}(z)) \mbox{ where z is an even integer} \\
                                                                  \mu_{\alpha}^{(2)}(z), & \mbox{otherwise}.
                                                                \end{cases}
\end{equation}
where $z \in \mathbb{Z}_{N(\alpha)}.$

According to \hyperref[thm1]{theorem \ref{thm1}}, the modulo function $\mu$ is a ring homomorphism. We should show that it is a bijective ring homomorphism, i.e., a ring isomorphism. This mapping is a surjective ring homomorphism because of $\text{Im}\mu= \{ \mu_{\alpha}(z) = \mathrm{min} \{ \mu_{\alpha}^{(1)}(z), \mu_{\alpha}^{(2)}(z) \} : z \in \mathbb{Z}_{N(\alpha)} \} = \mathcal{H}_{\alpha}.$ If $z=0,$ then
\begin{equation}\label{equ33}
\begin{array}{lll}
  \mu_{\alpha}(0) & = &  \mathrm{min} \{ \mu_{\alpha}^{(1)}(0), \mu_{\alpha}^{(2)}(0) \}.
\end{array}
\end{equation}
From \hyperref[prop1]{proposition \ref{prop1}},
\begin{equation}\label{equ34}
\begin{array}{lll}
  \mu_{\alpha}(0) & = & 0.
\end{array}
\end{equation}
If $z\neq0,$ then $\mu_{\alpha}(z)$ is to greater than or equal to $1.$ Hereby, this mapping is an injective ring homomorphism because of $ Ker\mu =  \{ z\in\mathbb{Z}_{N(\alpha)} : \mu_{\alpha}(z)=0 \}= \{z\in\mathbb{Z}_{N(\alpha)} : z=0\}=\{0\}.$ So, $\mu$ function is a ring isomorphism since it is both a surjective ring homomorphism and an injective ring homomorphism, i.e. ${\mathbb{Z}_{N(\alpha)}} \cong {\mathcal{H}_\alpha}.$ This completes the proof.
\end{proof}

\begin{definition} \label{def11}
Let $\mathbb{F}_{q}$ be a finite field with $q$ elements. $C$ is called a code if $C$ is a nonempty subset of $\mathbb{F}_{q}^{n}.$ An element of $C$ is called a codeword in $C.$ A linear code $C$ of length $n$ over $\mathbb{F}_{q}$ is defined to be a subspace of $\mathbb{F}_{q}^{n}.$
\end{definition}

\begin{definition} \label{def12}
A code $C$ of length $n$ is a subset of the direct product $\mathcal{H}^{n}$ of $n$ copies of $\mathcal{H}.$ In each of the cases we consider, $\mathcal{H}$ is an Abelian group, and thus the same is true for $\mathcal{H}^{n}.$ A code $C$ is a group code if it is a subgroup of $\mathcal{H}^{n},$ or equivalently as $\mathcal{H}^{n}$ is a finite group,
\begin{equation}\nonumber
c,c^{'}\in C \Rightarrow c-c^{'}\in C.
\end{equation}
In the case when $\mathcal{H}$ is a finite field, and thus $\mathcal{H}^{n}$ is a vector space of dimension $n$ over $\mathcal{H},$ then a linear code is a subspace $C$ of $\mathcal{H}^{n}.$ Here we say that a code $C$ in $\mathcal{H}^{n}$ is an $(n,k)-code$ if the size of $C$ is equal to $|\mathcal{H}|^{k}$ (The case $k=0$ is of less interest, and thus left aside) \cite{bib12}.
\end{definition}

\begin{definition} \label{def13}
The average energy of $\mathcal{H}_{\alpha}$ that denoted by $\mathcal{E}_{\pi}$ is computed by
\begin{equation}\label{equ127}
\mathcal{E}_{\pi} = \frac{1}{N(\alpha)} \sum_{z=0}^{N(\alpha)-1}N(\mu_{\alpha}(z))
\end{equation}
where $z\in\mathbb{Z}_{N(\alpha)}$.
\end{definition}

\section{Some Results}\label{sec4}

This section presents some results related to the modulo function $\mu$ defined in \hyperref[def10]{definition \ref{def10}} and, the mathematical notations given in \hyperref[equ4]{eq. \ref{equ4}} and \hyperref[equ5]{eq. \ref{equ5}}. $\mathcal{H}^{(1)}_{\alpha}$ is the set of residual class with respect to \hyperref[equ4]{eq. \ref{equ4}}, $\mathcal{H}^{(2)}_{\alpha}$ is the set of residual class with respect to \hyperref[equ5]{eq. \ref{equ5}}, and $\mathcal{H}_{\alpha}$ is the set of residual class with respect to the modulo function $\mu$ defined in \hyperref[def10]{definition \ref{def10}}. $2+i,$ $3+2j,$ and $4i+k$ are prime Hurwitz integers that have two-components, and so on. $3+i+j,$ $3+2j+2k,$ and $4i+3j+2k$ are prime Hurwitz integers that have three-components, and so on.

\begin{corollary} \label{cor1}
Let $\alpha$ be a prime Hurwitz integer that have two-components. Then, we have
\begin{equation}\label{equu35}
\mathcal{H}_{\alpha}=\mathcal{H}^{(1)}_{\alpha}.
\end{equation}
\end{corollary}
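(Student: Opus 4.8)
The plan is to prove the stronger pointwise statement that $\mu_{\alpha}(z)=\mu^{(1)}_{\alpha}(z)$ for every $z\in\mathbb{Z}_{N(\alpha)}$, from which $\mathcal{H}_{\alpha}=\mathcal{H}^{(1)}_{\alpha}$ follows at once by \hyperref[equ7]{eq. \ref{equ7}}. The starting observation is the identity $z=\alpha y$ with $y=\tfrac{\overline{\alpha} z}{N(\alpha)}$, valid because $\alpha\overline{\alpha}=N(\alpha)$ and $z$ is a scalar. Substituting into \hyperref[equ4]{eq. \ref{equ4}} and \hyperref[equ5]{eq. \ref{equ5}} gives $\mu^{(1)}_{\alpha}(z)=\alpha\bigl(y-\lfloor y\rceil\bigr)$ and $\mu^{(2)}_{\alpha}(z)=\alpha\bigl(y-\lfloor\lfloor y\rceil\rceil\bigr)$. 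Since the quaternion norm is multiplicative, $N(\mu^{(1)}_{\alpha}(z))=N(\alpha)\,N(y-\lfloor y\rceil)$ and $N(\mu^{(2)}_{\alpha}(z))=N(\alpha)\,N(y-\lfloor\lfloor y\rceil\rceil)$, so the comparison in \hyperref[equ6]{eq. \ref{equ6}} reduces to comparing the squared componentwise rounding errors of $y$ against the integers and against the half-integers.

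Next I would exploit the two-component hypothesis. Writing $\alpha$ with exactly two nonzero (integer) components, $\overline{\alpha} z$ and hence $y$ have exactly two component positions that may be nonzero and two positions that are identically $0$. For a component $y_{m}$ let $d_{I}(y_{m})$ and $d_{H}(y_{m})$ denote its distances to the nearest integer and nearest half-integer; a short check gives $d_{I}(y_{m})+d_{H}(y_{m})=\tfrac12$, whence $d_{H}(y_{m})^{2}-d_{I}(y_{m})^{2}=\tfrac12\bigl(d_{H}(y_{m})-d_{I}(y_{m})\bigr)\ge-\tfrac14$. On the two zero positions $d_{I}=0$ while $d_{H}=\tfrac12$ by the rounding convention fixed after \hyperref[def9]{Definition \ref{def9}}, contributing $+\tfrac14$ each; on the two nonzero positions the contribution is at least $-\tfrac14$ each. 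Summing the four positions yields $N(y-\lfloor\lfloor y\rceil\rceil)-N(y-\lfloor y\rceil)\ge 2\cdot\tfrac14-2\cdot\tfrac14=0$, i.e. $N(\mu^{(1)}_{\alpha}(z))\le N(\mu^{(2)}_{\alpha}(z))$ for all $z$.

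To upgrade this to the pointwise equality $\mu_{\alpha}=\mu^{(1)}_{\alpha}$ I must rule out ties, since the middle clause of \hyperref[equ6]{eq. \ref{equ6}} would otherwise return $\mu^{(2)}_{\alpha}(z)$ for odd $z$ and break the set equality. Equality in the estimate forces $d_{H}=0$ on both nonzero positions, i.e. both nonzero components of $y$ must be exactly half-integers. But such a component has the form $\pm\tfrac{\alpha_{m} z}{N(\alpha)}$, and equating it to $n+\tfrac12$ gives $2\alpha_{m} z=N(\alpha)(2n+1)$; as $N(\alpha)$ is an odd prime for every two-component prime Hurwitz integer under consideration, the right side is odd and the left side even, a contradiction. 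Hence the inequality is strict for every $z\neq0$ (at $z=0$ one has $\mu^{(1)}_{\alpha}(0)=0$ while $N(\mu^{(2)}_{\alpha}(0))=N(\alpha)>0$), no tie ever occurs, and $\mu_{\alpha}(z)=\mu^{(1)}_{\alpha}(z)$ throughout.

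I expect the main obstacle to be the bookkeeping of the two vanishing components: the fact that the nearest-half-integer rounding of $0$ is $\tfrac12$ rather than $0$ is precisely what penalises $\mu^{(2)}_{\alpha}$ by $\tfrac12$ in rescaled norm and thereby lets $\mu^{(1)}_{\alpha}$ always win. The secondary subtlety is the parity argument excluding ties; it is what delivers genuine set equality rather than mere equality of norms, and it quietly requires $N(\alpha)$ to be odd. The degenerate case $\alpha=1+i$ with $N(\alpha)=2$ is exactly where a true tie at an odd $z$ can occur, so it should be treated as an exception or read out of the statement.
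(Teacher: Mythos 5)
Your argument is essentially correct, but there is nothing in the paper to compare it against: Corollary \ref{cor1} is stated with no proof at all (Proposition \ref{prop2} follows it immediately), so your proposal supplies a missing argument rather than an alternative to an existing one. The route you take is sound. Writing $y=\overline{\alpha}z/N(\alpha)$, the factorizations $\mu^{(1)}_{\alpha}(z)=\alpha\bigl(y-\lfloor y\rceil\bigr)$ and $\mu^{(2)}_{\alpha}(z)=\alpha\bigl(y-\lfloor\lfloor y\rceil\rceil\bigr)$ together with multiplicativity of the quaternion norm reduce the comparison in eq. \ref{equ6} to componentwise rounding errors of $y$. The identity $d_{I}+d_{H}=\tfrac12$ gives $d_{H}^{2}-d_{I}^{2}=\tfrac12(d_{H}-d_{I})$, so the two identically zero positions of $y$ each charge $\mu^{(2)}_{\alpha}$ exactly $+\tfrac14$, while each of the two live positions can credit it by at most $\tfrac14$, with equality only if that component of $y$ lies in $\mathbb{Z}+\tfrac12$; your parity computation $\pm 2\alpha_{m}z=N(\alpha)(2n+1)$ rules that out whenever $N(\alpha)$ is odd. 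This yields the strict inequality $N(\mu^{(1)}_{\alpha}(z))<N(\mu^{(2)}_{\alpha}(z))$ for every $z\in\mathbb{Z}_{N(\alpha)}$, hence the pointwise identity $\mu_{\alpha}=\mu^{(1)}_{\alpha}$, which is stronger than the stated set equality and makes the tie-breaking clause of eq. \ref{equ6} irrelevant. Note that your argument never uses primality of $N(\alpha)$ beyond its oddness.

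Your closing caveat deserves emphasis, because it is not a cosmetic exception: for $N(\alpha)=2$ the corollary as printed is false, and your proof isolates exactly why. Take $\alpha=1+i$. At $z=1$ one has $y=\tfrac12-\tfrac12 i$, so $\mu^{(1)}_{\alpha}(1)=1$, while $\mu^{(2)}_{\alpha}(1)=1-(1+i)\bigl(\tfrac12-\tfrac12 i+\tfrac12 j+\tfrac12 k\bigr)=1-(1+k)=-k$; both have norm $1$, and since $z=1$ is odd the third clause of eq. \ref{equ6} selects $\mu^{(2)}_{\alpha}(1)$. Hence $\mathcal{H}_{1+i}=\{0,-k\}\neq\{0,1\}=\mathcal{H}^{(1)}_{1+i}$, exactly the tie scenario your parity argument excludes for odd norm. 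So the hypothesis of the corollary should be read as requiring, in addition to the two-component condition, that $N(\alpha)$ be odd (equivalently $N(\alpha)>2$); under that reading your proof is complete, and it corrects an oversight that the paper, having omitted the proof, never confronts.
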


\begin{proposition} \label{prop2}
Let $\beta=\beta_{1}+\beta_{2}i+\beta_{3}j+\beta_{4}k$ be a Hurwitz integer. If $\beta_{1}, \beta_{2}, \beta_{3}, \beta_{4} \in \mathbb{Z},$ then,
\begin{equation}\label{equ35}
\lfloor \beta \rceil = \beta,
\end{equation}
and
\begin{equation}\label{equ36}
\lfloor \lfloor \beta \rceil \rceil=\beta \pm \frac{1}{2} \pm \frac{1}{2}i \pm \frac{1}{2}j \pm \frac{1}{2}k.
\end{equation}
If $\beta_{1}, \beta_{2}, \beta_{3}, \beta_{4} \in \mathbb{Z}+\frac{1}{2},$ then,
\begin{equation}\label{equ37}
\lfloor \beta \rceil = \beta \pm \frac{1}{2} \pm \frac{1}{2}i \pm \frac{1}{2}j \pm \frac{1}{2}k,
\end{equation}
and
\begin{equation}\label{equ38}
\lfloor \lfloor \beta \rceil \rceil=\beta.
\end{equation}
\end{proposition}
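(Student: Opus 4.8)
The plan is to reduce the statement to a coordinate-by-coordinate verification. By Definition~\ref{def8} and Definition~\ref{def9}, both $\lfloor \cdot \rceil$ and $\lfloor\lfloor \cdot \rceil\rceil$ act separately on each of the four components $\beta_{1},\beta_{2},\beta_{3},\beta_{4}$. Hence it suffices to determine, for a single rational entry, its image under each operator in the two regimes of interest: when the entry is an ordinary integer and when it is a half-integer.

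First I would dispose of the case $\beta_{1},\beta_{2},\beta_{3},\beta_{4}\in\mathbb{Z}$. Equation~\eqref{equ35} is immediate, since an integer is the integer nearest to itself and is therefore fixed by $\lfloor \cdot \rceil$. For Equation~\eqref{equ36} I would note that an integer $n$ sits at distance exactly $\frac{1}{2}$ from each of its two nearest half-integers, $n-\frac{1}{2}$ and $n+\frac{1}{2}$. Thus $\lfloor\lfloor \cdot \rceil\rceil$ is forced into a tie, and in every case its value differs from $n$ by $\pm\frac{1}{2}$; applying this to all four components yields $\lfloor\lfloor \beta \rceil\rceil=\beta\pm\frac{1}{2}\pm\frac{1}{2}i\pm\frac{1}{2}j\pm\frac{1}{2}k$.

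The case $\beta_{1},\beta_{2},\beta_{3},\beta_{4}\in\mathbb{Z}+\frac{1}{2}$ is perfectly symmetric. Equation~\eqref{equ38} holds because a half-integer is the half-integer nearest to itself and is fixed by $\lfloor\lfloor \cdot \rceil\rceil$; and for Equation~\eqref{equ37} I would observe that a half-integer $n+\frac{1}{2}$ lies at distance exactly $\frac{1}{2}$ from each of the neighbouring integers $n$ and $n+1$, so $\lfloor \cdot \rceil$ again resolves a tie and shifts the entry by $\pm\frac{1}{2}$ in each component.

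The one point requiring care is the tie-breaking. Because each component lies exactly halfway between its two candidate targets, the sign of the $\frac{1}{2}$-displacement is never determined by proximity; it is fixed solely by the rounding convention stated after Definition~\ref{def9} (a negative or zero entry rounds up, a positive entry rounds down). I would therefore make explicit that the $\pm$ signs appearing in \eqref{equ36} and \eqref{equ37} are not free choices but are the specific component-wise signs dictated by that convention. Once this is spelled out, the remaining verification for each of the four coordinates is routine.
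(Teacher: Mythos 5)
Your proposal is correct and follows essentially the same route as the paper's proof: reduce to a component-wise check, note that $\lfloor \cdot \rceil$ fixes integers while $\lfloor\lfloor \cdot \rceil\rceil$ shifts them by $\pm\frac{1}{2}$, and argue symmetrically for half-integer components. Your explicit treatment of the tie-breaking convention is in fact slightly more careful than the paper, which compresses that step into the phrase ``since the property of rounding notation.''
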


\begin{proof}
Let $\beta=\beta_{1}+\beta_{2}i+\beta_{3}j+\beta_{4}k$ be a Hurwitz integer. If $\beta_{1}, \beta_{2}, \beta_{3}, \beta_{4} \in \mathbb{Z},$ then,
\begin{equation}\label{equ39}
           \begin{array}{lll}
             \lfloor \beta \rceil & = & \lfloor \beta_{1}+\beta_{2}i+\beta_{3}j+\beta_{4}k \rceil \\
                                  & = & \lfloor \beta_{1}\rceil + \lfloor \beta_{2} \rceil i + \lfloor \beta_{3} \rceil j + \lfloor \beta_{4}\rceil k \\
                                  & = & \beta_{1}+\beta_{2}i+\beta_{3}j+\beta_{4}k \\
                                  & = & \beta.
           \end{array}
\end{equation}
On the other hand,
\begin{equation}\label{equ41}
           \begin{array}{lll}
             \lfloor \lfloor \beta \rceil \rceil & = & \lfloor \lfloor \beta_{1}+\beta_{2}i+\beta_{3}j+\beta_{4}k \rceil \rceil \\
                                  & = & \lfloor \lfloor \beta_{1} \rceil \rceil + \lfloor \lfloor \beta_{2} \rceil \rceil i + \lfloor \lfloor \beta_{3} \rceil \rceil j + \lfloor \lfloor \beta_{4} \rceil \rceil k.
           \end{array}
           \end{equation}
Since the property of rounding notation, then,
           \begin{equation}\label{equ42}
           \begin{array}{lll}
             \lfloor \lfloor \beta \rceil \rceil & = & \beta_{1} \pm \frac{1}{2} + (\beta_{2} \pm \frac{1}{2})i + (\beta_{3} \pm \frac{1}{2})j + (\beta_{4} \pm \frac{1}{2})k \\
                                  & = & \beta_{1} + \beta_{2}i + \beta_{3}j + \beta_{4}k \pm \frac{1}{2} \pm \frac{1}{2}i \pm \frac{1}{2}j \pm \frac{1}{2}k \\
                                  & = & \beta \pm \frac{1}{2} \pm \frac{1}{2}i \pm \frac{1}{2}j \pm \frac{1}{2}k \\
           \end{array}
           \end{equation}
If $\beta_{1}, \beta_{2}, \beta_{3}, \beta_{4} \in \mathbb{Z}+\frac{1}{2},$ then,
\begin{equation}\label{equ43}
           \begin{array}{lll}
             \lfloor \beta \rceil & = & \lfloor \beta_{1}+\beta_{2}i+\beta_{3}j+\beta_{4}k \rceil \\
                                  & = & \lfloor \beta_{1} \rceil + \lfloor \beta_{2} \rceil i + \lfloor \beta_{3} \rceil j + \lfloor \beta_{4}\rceil k .
           \end{array}
        \end{equation}
Since the property of rounding notation, then,
        \begin{equation}\label{equ44}
        \begin{array}{lll}
             \begin{array}{lll}
             \lfloor \beta \rceil & = & \beta_{1} \pm \frac{1}{2} + (\beta_{2} \pm \frac{1}{2})i + (\beta_{3} \pm \frac{1}{2})j + (\beta_{4} \pm \frac{1}{2})k \\
                                  & = & \beta_{1} + \beta_{2}i + \beta_{3}j + \beta_{4}k \pm \frac{1}{2} \pm \frac{1}{2}i \pm \frac{1}{2}j \pm \frac{1}{2}k \\
                                  & = & \beta \pm \frac{1}{2} \pm \frac{1}{2}i \pm \frac{1}{2}j \pm \frac{1}{2}k.
           \end{array}
           \end{array}
        \end{equation} \\
On the other hand,
\begin{equation}\label{equ44}
           \begin{array}{lll}
             \lfloor \lfloor \beta \rceil \rceil & = & \lfloor \lfloor \beta_{1}+\beta_{2}i+\beta_{3}j+\beta_{4}k \rceil \rceil \\
                                  & = & \lfloor \lfloor \beta_{1} \rceil \rceil + \lfloor \lfloor \beta_{2} \rceil \rceil i + \lfloor \lfloor \beta_{3} \rceil \rceil j + \lfloor \lfloor \beta_{4} \rceil \rceil k \\
                                  & = & \beta_{1} + \beta_{2}i + \beta_{3}j + \beta_{4}k \\
                                  & = & \beta.
           \end{array}
           \end{equation}
This completes the proof.
\end{proof}

\begin{corollary}\label{prop3}
Let $\beta=\beta_{1}+\beta_{2}i+\beta_{3}j+\beta_{4}k$ and $\pi = \pi_{1} + \pi_{2}i + \pi_{3}j + \pi_{4}k$ be Hurwitz integers. If $\beta_{1}, \beta_{2}, \beta_{3}, \beta_{4} \in \mathbb{Z},$ and $\pi_{1}, \pi_{2}, \pi_{3}, \pi_{4} \in \mathbb{Z},$ then,
\begin{equation}\label{equ46}
\lfloor \pi \pm \beta \rceil = \lfloor \pi \rceil \pm \lfloor \beta \rceil = \pi \pm \beta,
\end{equation}
and
\begin{equation}\label{equ47}
\lfloor \lfloor \pi \pm \beta \rceil \rceil = \left( \pi \pm \frac{1}{2} \pm \frac{1}{2}i \pm \frac{1}{2}j \pm \frac{1}{2}k  \right) \pm \left( \beta  \pm \frac{1}{2} \pm \frac{1}{2}i \pm \frac{1}{2}j \pm \frac{1}{2}k \right).
\end{equation}
If $\beta_{1}, \beta_{2}, \beta_{3}, \beta_{4} \in \mathbb{Z},$ and $\pi_{1}, \pi_{2}, \pi_{3}, \pi_{4} \in \mathbb{Z}+\frac{1}{2},$ then,
\begin{equation}\label{equ48}
\lfloor \pi \pm \beta \rceil = \lfloor \pi \rceil \pm \lfloor \beta \rceil = \pi \pm \beta \pm \frac{1}{2} \pm \frac{1}{2}i \pm \frac{1}{2}j \pm \frac{1}{2}k,
\end{equation}
and
\begin{equation}\label{equ49}
\lfloor \lfloor \pi \pm \beta \rceil \rceil = \lfloor\lfloor \pi \rceil\rceil \pm \lfloor\lfloor \beta \rceil\rceil = \pi \pm \beta \pm \frac{1}{2} \pm \frac{1}{2}i \pm \frac{1}{2}j \pm \frac{1}{2}k.
\end{equation}
If $\beta_{1}, \beta_{2}, \beta_{3}, \beta_{4} \in \mathbb{Z}+\frac{1}{2},$ and $\pi_{1}, \pi_{2}, \pi_{3}, \pi_{4} \in \mathbb{Z}+\frac{1}{2},$ then,
\begin{equation}\label{equ50}
\lfloor \pi \pm \beta \rceil = \lfloor \pi \rceil \pm \lfloor \beta \rceil = \left( \pi \pm \frac{1}{2} \pm \frac{1}{2}i \pm \frac{1}{2}j \pm \frac{1}{2}k  \right) \pm \left( \beta  \pm \frac{1}{2} \pm \frac{1}{2}i \pm \frac{1}{2}j \pm \frac{1}{2}k \right),
\end{equation}
and
\begin{equation}\label{equ51}
\lfloor \lfloor \pi \pm \beta \rceil \rceil = \lfloor\lfloor \pi \rceil\rceil \pm \lfloor\lfloor \beta \rceil\rceil = \pi \pm \beta.
\end{equation}
\end{corollary}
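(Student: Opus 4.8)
The plan is to obtain this corollary as a direct consequence of \hyperref[prop2]{Proposition \ref{prop2}}, reducing everything to the single question of which of the two families $\mathcal{H}(\mathbb{Z})$ and $\mathcal{H}(\mathbb{Z}+\frac{1}{2})$ the combination $\pi \pm \beta$ belongs to. The only ingredient not already contained in Proposition \ref{prop2} is an elementary parity observation, carried out componentwise: since the $m$-th component of $\pi \pm \beta$ is $\pi_m \pm \beta_m$, I would first record the three closure facts $\mathbb{Z} \pm \mathbb{Z} \subseteq \mathbb{Z}$, $(\mathbb{Z}+\frac{1}{2}) \pm \mathbb{Z} \subseteq \mathbb{Z}+\frac{1}{2}$, and $(\mathbb{Z}+\frac{1}{2}) \pm (\mathbb{Z}+\frac{1}{2}) \subseteq \mathbb{Z}$. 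These determine unambiguously which branch of Proposition \ref{prop2} applies to the single Hurwitz integer $\pi \pm \beta$ in each of the three hypotheses.

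Then I would dispatch the three cases. When $\pi$ and $\beta$ both have integer components, $\pi \pm \beta$ is again integer-valued, so Proposition \ref{prop2} gives $\lfloor \pi \pm \beta \rceil = \pi \pm \beta$ and $\lfloor\lfloor \pi \pm \beta \rceil\rceil = (\pi \pm \beta) \pm \frac{1}{2} \pm \frac{1}{2}i \pm \frac{1}{2}j \pm \frac{1}{2}k$, which is \eqref{equ46}--\eqref{equ47}. When both have half-integer components, the combination is once more integer-valued, so the same branch applies and yields \eqref{equ50}--\eqref{equ51}. The mixed case is the only one in which $\pi \pm \beta$ lands in $\mathbb{Z}+\frac{1}{2}$, so there I would invoke the other branch of Proposition \ref{prop2} to obtain \eqref{equ48}--\eqref{equ49}. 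For the intermediate ``sum of separate roundings'' expressions such as $\lfloor \pi \rceil \pm \lfloor \beta \rceil$, I would simply apply Proposition \ref{prop2} to $\pi$ and to $\beta$ individually and add, reading off the displayed forms.

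The step I expect to be the main obstacle is pinning down the precise meaning of the middle equalities of the type $\lfloor \pi \pm \beta \rceil = \lfloor \pi \rceil \pm \lfloor \beta \rceil$, because nearest-integer rounding is \emph{not} additive: already in a single half-integer component one has $\lfloor \tfrac{1}{2} \rceil + \lfloor \tfrac{1}{2} \rceil = 0 \ne 1 = \lfloor 1 \rceil$. Hence these identities cannot be read as literal numerical equalities with a fixed choice of signs; the role of the symbol $\pm \frac{1}{2} \pm \frac{1}{2}i \pm \frac{1}{2}j \pm \frac{1}{2}k$ is exactly to absorb the sign ambiguity created by the rounding convention of \hyperref[def8]{Definition \ref{def8}} and \hyperref[def9]{Definition \ref{def9}}. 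I would therefore present the argument as an equality of \emph{forms}: in each case both sides agree with $\pi \pm \beta$ after adjustment by a half-integer correction vector of the type appearing in Proposition \ref{prop2}, and I would verify that this reading is internally consistent across the three cases rather than attempt a sign-exact additivity statement, which is false. With Proposition \ref{prop2} and the closure facts in hand, everything that remains is routine componentwise bookkeeping.
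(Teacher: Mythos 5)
Your proposal takes essentially the same route as the paper: the paper's entire proof of this corollary is the single sentence that it ``can be easily seen from Proposition~\ref{prop2}'', and your argument---use componentwise parity ($\mathbb{Z}\pm\mathbb{Z}\subseteq\mathbb{Z}$, $(\mathbb{Z}+\tfrac{1}{2})\pm\mathbb{Z}\subseteq\mathbb{Z}+\tfrac{1}{2}$, $(\mathbb{Z}+\tfrac{1}{2})\pm(\mathbb{Z}+\tfrac{1}{2})\subseteq\mathbb{Z}$) to decide which branch of Proposition~\ref{prop2} applies to $\pi\pm\beta$, then apply that proposition to $\pi$ and $\beta$ separately for the split expressions---is precisely the elaboration the paper leaves implicit. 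Your additional caveat, that equalities such as $\lfloor \pi \pm \beta \rceil = \lfloor \pi \rceil \pm \lfloor \beta \rceil$ cannot be sign-exact numerical identities (rounding is not additive) and must be read as equalities of forms with the $\pm\tfrac{1}{2}\pm\tfrac{1}{2}i\pm\tfrac{1}{2}j\pm\tfrac{1}{2}k$ terms absorbing the ambiguity, is a legitimate observation about a point the paper silently glosses over rather than a deviation from its method.
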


\begin{proof}
The proof can be easily seen from \hyperref[prop2]{proposition \ref{prop2}}.
\end{proof}

\begin{proposition} \label{prop4}
Let $\beta=\beta_{1}+\beta_{2}i+\beta_{3}j+\beta_{4}k$ and $\pi = \pi_{1} + \pi_{2}i + \pi_{3}j + \pi_{4}k$ be Hurwitz integers. We consider $\pi_{1}$ and $\beta_{1}.$ If $\pi_{1}, \beta_{1} \in \mathbb{Z},$ then,
\begin{equation}\label{equ52}
\lfloor \pi_{1} \beta_{1} \rceil = \pi_{1} \beta_{1},
\end{equation}
and
\begin{equation}\label{equ53}
\lfloor \lfloor \pi_{1} \beta_{1} \rceil \rceil= \pi_{1} \beta_{1} \pm \frac{1}{2}.
\end{equation}
Let $\pi_{1} \in \mathbb{Z}$ and $\beta_{1} \in \mathbb{Z}+\frac{1}{2}.$ If $\pi_{1}$ is an even integer,  then,
\begin{equation}\label{equ54}
\lfloor \pi_{1} \beta_{1} \rceil = \pi_{1} \beta_{1},
\end{equation}
and
\begin{equation}\label{equ55}
\lfloor \lfloor \pi_{1} \beta_{1} \rceil \rceil= \pi_{1} \beta_{1} \pm \frac{1}{2}.
\end{equation}
If $\pi_{1}$ is an odd integer, then,
\begin{equation}\label{equ56}
\lfloor \pi_{1} \beta_{1} \rceil = \pi_{1} \beta_{1} \pm \frac{1}{2},
\end{equation}
and
\begin{equation}\label{equ57}
\lfloor \lfloor \pi_{1} \beta_{1} \rceil \rceil= \pi_{1} \beta_{1}.
\end{equation}
Let $\pi_{1}, \beta_{1} \in \mathbb{Z}+\frac{1}{2}$ such that $\pi=\frac{\lambda_{1}}{2}$ and $\beta=\frac{\lambda_{2}}{2}$ where $\lambda_{1}$ and $\lambda_{2}$ are odd integers. If $\lambda_{1}\lambda_{2} \equiv 1 \mod 4,$ then,
\begin{equation}\label{equ58}
\lfloor \pi_{1} \beta_{1} \rceil = \pm \pi_{1} \beta_{1} \mp \frac{1}{4},
\end{equation}
and
\begin{equation}\label{equ59}
\lfloor \lfloor \pi_{1} \beta_{1} \rceil \rceil= \pm \pi_{1} \beta_{1} \pm \frac{1}{4}.
\end{equation}
If $\lambda_{1}\lambda_{2} \equiv 3 \mod 4,$ then,
\begin{equation}\label{equ58}
\lfloor \pi_{1} \beta_{1} \rceil = \pm \pi_{1} \beta_{1} \pm \frac{1}{4},
\end{equation}
and
\begin{equation}\label{equ59}
\lfloor \lfloor \pi_{1} \beta_{1} \rceil \rceil= \pm \pi_{1} \beta_{1} \mp \frac{1}{4}.
\end{equation}
\end{proposition}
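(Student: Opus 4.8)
The plan is to reduce each assertion to a statement about rounding a single real number, since both rounding maps act componentwise and only the real parts $\pi_{1},\beta_{1}$ enter. The governing idea is, in each case, to locate the residue class modulo $1$ of the product $\pi_{1}\beta_{1}$ --- whether it lies in $\mathbb{Z}$, in $\mathbb{Z}+\tfrac{1}{2}$, in $\mathbb{Z}+\tfrac{1}{4}$, or in $\mathbb{Z}+\tfrac{3}{4}$ --- and then to apply the following elementary rounding facts, read off from Definitions~\ref{def8} and~\ref{def9} and the up/down convention stated after Definition~\ref{def9}. A number already in $\mathbb{Z}$ is fixed by $\lfloor\cdot\rceil$ and displaced by $\pm\tfrac{1}{2}$ under $\lfloor\lfloor\cdot\rceil\rceil$, the sign being dictated by the sign convention since this is a tie; a number in $\mathbb{Z}+\tfrac{1}{2}$ is displaced by $\pm\tfrac{1}{2}$ under $\lfloor\cdot\rceil$ (again a tie) and fixed by $\lfloor\lfloor\cdot\rceil\rceil$; a number in $\mathbb{Z}+\tfrac{1}{4}$ is displaced by $-\tfrac{1}{4}$ to its nearest integer and by $+\tfrac{1}{4}$ to its nearest half-integer; and a number in $\mathbb{Z}+\tfrac{3}{4}$ by $+\tfrac{1}{4}$ and $-\tfrac{1}{4}$ respectively. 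The first two facts are precisely the content of Proposition~\ref{prop2}.

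For the block $\pi_{1},\beta_{1}\in\mathbb{Z}$ the product is an integer, so both identities follow at once from the $\mathbb{Z}$-fact. For the block $\pi_{1}\in\mathbb{Z}$, $\beta_{1}\in\mathbb{Z}+\tfrac{1}{2}$, I would write $\beta_{1}=m+\tfrac{1}{2}$ so that $\pi_{1}\beta_{1}=\pi_{1}m+\tfrac{\pi_{1}}{2}$, and split on the parity of $\pi_{1}$: if $\pi_{1}$ is even then $\tfrac{\pi_{1}}{2}\in\mathbb{Z}$ and the product lies in $\mathbb{Z}$, giving the even-case identities; if $\pi_{1}$ is odd then $\tfrac{\pi_{1}}{2}\in\mathbb{Z}+\tfrac{1}{2}$ and the product lies in $\mathbb{Z}+\tfrac{1}{2}$, giving the odd-case identities. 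Each conclusion is then immediate from the matching elementary fact.

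The substantive block is $\pi_{1},\beta_{1}\in\mathbb{Z}+\tfrac{1}{2}$. Writing $\pi_{1}=\tfrac{\lambda_{1}}{2}$ and $\beta_{1}=\tfrac{\lambda_{2}}{2}$ with $\lambda_{1},\lambda_{2}$ odd gives $\pi_{1}\beta_{1}=\tfrac{\lambda_{1}\lambda_{2}}{4}$ with $\lambda_{1}\lambda_{2}$ odd, hence $\lambda_{1}\lambda_{2}\equiv 1$ or $3\pmod 4$. If $\lambda_{1}\lambda_{2}\equiv 1\pmod 4$, write $\lambda_{1}\lambda_{2}=4t+1$ so $\pi_{1}\beta_{1}=t+\tfrac{1}{4}\in\mathbb{Z}+\tfrac{1}{4}$; the $\mathbb{Z}+\tfrac{1}{4}$-fact then gives nearest integer $t=\pi_{1}\beta_{1}-\tfrac{1}{4}$ and nearest half-integer $t+\tfrac{1}{2}=\pi_{1}\beta_{1}+\tfrac{1}{4}$. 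If instead $\lambda_{1}\lambda_{2}\equiv 3\pmod 4$, write $\lambda_{1}\lambda_{2}=4t+3$ so $\pi_{1}\beta_{1}=t+\tfrac{3}{4}\in\mathbb{Z}+\tfrac{3}{4}$, and the $\mathbb{Z}+\tfrac{3}{4}$-fact gives nearest integer $t+1=\pi_{1}\beta_{1}+\tfrac{1}{4}$ and nearest half-integer $t+\tfrac{1}{2}=\pi_{1}\beta_{1}-\tfrac{1}{4}$. These are the stated identities under the top-sign reading of the $\pm/\mp$ symbols.

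I expect the only real difficulty to be bookkeeping of the $\pm$ and $\mp$ symbols, which carry different meanings in the two regimes: in the first two blocks they record the sign-convention tie-break of an exact midpoint, whereas in the third block the rounding is never a tie and the sign of the $\tfrac{1}{4}$-displacement is fixed entirely by the residue $\lambda_{1}\lambda_{2}\bmod 4$. I would therefore not carry the ambiguous symbols through the computation, but instead fix a representative of each sign and each residue class and verify it directly against the conventions of Definitions~\ref{def8} and~\ref{def9}, which keeps the displacement directions unambiguous.
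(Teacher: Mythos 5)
Your proposal is correct and follows essentially the same route as the paper's proof: a case split on the residue classes of $\pi_{1},\beta_{1}$, reduction of the product to the form $m$, $m+\tfrac{1}{2}$, $k+\tfrac{1}{4}$, or $k+\tfrac{3}{4}$ (via $\lambda_{1}\lambda_{2}=4k+1$ or $4k+3$ in the half-integer block), and then direct application of the nearest-integer and nearest-half-integer displacement facts. Your explicit warning that the $\pm/\mp$ symbols mean a tie-break convention in the first two blocks but a determined displacement in the third is a cleaner way to organize the bookkeeping than the paper's, but it is a presentational refinement, not a different argument.
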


\begin{proof} Let $\beta=\beta_{1}+\beta_{2}i+\beta_{3}j+\beta_{4}k$ and $\pi = \pi_{1} + \pi_{2}i + \pi_{3}j + \pi_{4}k$ be Hurwitz integers. We consider $\pi_{1}$ and $\beta_{1}.$ Let $\pi_{1}, \beta_{1} \in \mathbb{Z}.$ $\pi_{1}\beta_{1} \in \mathbb{Z}$ because of $\pi_{1}, \beta_{1} \in \mathbb{Z}.$ So, from the property of round notations,
\begin{equation}\label{equ60}
\lfloor \pi_{1} \beta_{1} \rceil = \pi_{1} \beta_{1},
\end{equation}
and
\begin{equation}\label{equ61}
\lfloor \lfloor \pi_{1} \beta_{1} \rceil \rceil= \pi_{1} \beta_{1} \pm \frac{1}{2}.
\end{equation}
Let $\pi_{1} \in \mathbb{Z}$ and $\beta_{1} \in \mathbb{Z}+\frac{1}{2}.$ $\pi_{1}\beta_{1} \in \mathbb{Z}$ because of $\pi_{1} \in \mathbb{Z},$ $\beta_{1} \in \mathbb{Z}+\frac{1}{2},$ and $\pi_{1}$ is an even integer. So, from the property of round notations,
\begin{equation}\label{equ62}
\lfloor \pi_{1} \beta_{1} \rceil = \pi_{1} \beta_{1},
\end{equation}
and
\begin{equation}\label{equ63}
\lfloor \lfloor \pi_{1} \beta_{1} \rceil \rceil= \pi_{1} \beta_{1} \pm \frac{1}{2}.
\end{equation}
$\pi_{1}\beta_{1} \in \mathbb{Z}+\frac{1}{2}$ because of $\pi_{1} \in \mathbb{Z},$ $\beta_{1} \in \mathbb{Z}+\frac{1}{2},$ and $\pi_{1}$ is an odd integer. So, from the property of round notations,
\begin{equation}\label{equ64}
\lfloor \pi_{1} \beta_{1} \rceil = \pi_{1} \beta_{1} \pm \frac{1}{2},
\end{equation}
and
\begin{equation}\label{equ65}
\lfloor \lfloor \pi_{1} \beta_{1} \rceil \rceil= \pi_{1} \beta_{1}.
\end{equation}
Let $\pi_{1}, \beta_{1} \in \mathbb{Z}+\frac{1}{2}$ and, let $\pi_{1}=\frac{\lambda_{1}}{2}$ and $\beta_{1}=\frac{\lambda_{2}}{2}$ where $\lambda_{1}$ and $\lambda_{2}$ are odd integers. If $1 \equiv \lambda_{1} \lambda_{2} \mod 4,$ then, $\lambda_{1}\lambda_{2} = 4k +1$ where $k\in\mathbb{Z}.$ Hereby,
\begin{equation}\label{equ66}
\begin{array}{lll}
  \lfloor \pi_{1} \beta_{1} \rceil & = & \lfloor \pm \frac{\lambda_{1}\lambda_{2}}{4} \rceil \\
   & = & \lfloor \pm \frac{4k+1}{4} \rceil.
\end{array}
\end{equation}
$\lfloor \frac{4k+1}{4} \rceil = \pm k$ because of the property of round notation, and $ k < \lfloor \frac{4k+1}{4} \rceil < \frac{2k+1}{2}.$. Hereby,
\begin{equation}\label{equ67}
\begin{array}{lll}
  \lfloor \pi_{1} \beta_{1} \rceil & = & \pm k
\end{array}
\end{equation}
Since $k=\frac{\lambda_{1}\lambda_{2}}{4}-\frac{1}{4} = \pi_{1} \beta_{1}-\frac{1}{4},$ then
\begin{equation}\label{equ68}
\begin{array}{lll}
  \lfloor \pi_{1} \beta_{1} \rceil & = & \pm \pi_{1} \beta_{1} \mp \frac{1}{4}.
\end{array}
\end{equation}
On the other hand,
\begin{equation}\label{equ69}
\begin{array}{lll}
  \lfloor \lfloor \pi_{1} \beta_{1} \rceil \rceil & = & \lfloor \lfloor \pm \frac{\lambda_{1} \lambda_{2}}{4} \rceil \rceil \\
   & = & \lfloor \lfloor \pm \frac{4k+1}{4} \rceil \rceil \\
\end{array}
\end{equation}
$ \pm \lfloor \lfloor \frac{4k+1}{4} \rceil \rceil = \pm \frac{2k+1}{2}$ because of the property of round notation, and $k < \lfloor \lfloor \frac{4k+1}{4} \rceil \rceil < \frac{2k+1}{2}.$ Hereby,
\begin{equation}\label{equ70}
\begin{array}{lll}
  \lfloor \lfloor \pi_{1} \beta_{1} \rceil \rceil & = & \pm \frac{2k+1}{2}\\
  & = & \pm k \pm \frac{1}{2}.
\end{array}
\end{equation}
Since $k=\frac{\lambda_{1}\lambda_{2}}{4}-\frac{1}{4} = \pi_{1} \beta_{1}-\frac{1}{4},$ then
\begin{equation}\label{equ71}
\begin{array}{lll}
  \lfloor \lfloor \pi_{1} \beta_{1} \rceil \rceil & = & \pm \pi_{1} \beta_{1} \mp \frac{1}{4} \pm \frac{1}{2} \\
  & = & \pm \pi_{1} \beta_{1} \pm \frac{1}{4}.
\end{array}
\end{equation}
If $3 \equiv \lambda_{1} \lambda_{2} \mod 4,$ then, $\lambda_{1} \lambda_{2} = 4k + 3$ where $k\in\mathbb{Z}.$ Hereby,
\begin{equation}\label{equ72}
\begin{array}{lll}
  \lfloor \pi_{1} \beta_{1} \rceil & = & \lfloor \pm \frac{\lambda_{1} \lambda_{2}}{4} \rceil \\
   & = & \lfloor \pm \frac{4k+3}{4} \rceil.
\end{array}
\end{equation}
$\lfloor \frac{4k+3}{4} \rceil = \pm k \pm 1$ because of the property of round notation, and $\frac{2k+1}{2} < \lfloor \frac{4k+3}{4} \rceil < k+1.$ Hereby,
\begin{equation}\label{equ73}
\begin{array}{lll}
  \lfloor \pi_{1} \beta_{1} \rceil & = & \pm k \pm 1.
\end{array}
\end{equation}
Since $k=\frac{\lambda_{1}\lambda{2}}{4}-\frac{3}{4} = \pi_{1} \beta_{1} - \frac{3}{4},$ then
\begin{equation}\label{equ74}
\begin{array}{lll}
  \lfloor \pi_{1} \beta_{1} \rceil & = & \pm \pi_{1} \beta_{1} \mp \frac{3}{4} \pm 1 \\
  & = & \pm \pi_{1} \beta_{1} \pm \frac{1}{4}.
\end{array}
\end{equation}
On the other hand,
\begin{equation}\label{equ75}
\begin{array}{lll}
  \lfloor \lfloor \pi_{1} \beta_{1} \rceil \rceil & = & \lfloor \lfloor \pm \frac{\lambda_{1}}{4} \rceil \rceil \\
   & = & \lfloor \lfloor \pm \frac{4k+3}{4} \rceil \rceil \\
\end{array}
\end{equation}
$ \pm \lfloor \lfloor \frac{4k+3}{4} \rceil \rceil = \pm \frac{2k+1}{2}$ because of the property of round notation, and $ \frac{2k+1}{2} < \lfloor \lfloor \frac{4k+3}{4} \rceil \rceil < k + 1.$ Hereby,
\begin{equation}\label{equ76}
\begin{array}{lll}
  \lfloor \lfloor \pi_{1} \beta_{1} \rceil \rceil & = & \pm \frac{2k+1}{2}\\
  & = & \pm k \pm \frac{1}{2}.
\end{array}
\end{equation}
Since $k=\frac{\lambda_{1}\lambda_{2}}{4}-\frac{3}{4} = \pi_{1} \beta_{1}-\frac{3}{4},$ then
\begin{equation}\label{equ77}
\begin{array}{lll}
  \lfloor \lfloor \pi_{1} \beta_{1} \rceil \rceil & = & \pm \pi_{1} \beta_{1} \mp \frac{3}{4} \pm \frac{1}{2} \\
  & = & \pm \pi_{1} \beta_{1} \mp \frac{1}{4}.
\end{array}
\end{equation}
This completes the proof.
\end{proof}

\begin{proposition}\label{prop5}
Let $\alpha$ be a prime Hurwitz integer. Then,
\begin{equation}\label{equ78}
N(\mu_{\alpha}(0))=N(\alpha).
\end{equation}
\end{proposition}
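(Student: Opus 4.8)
The plan is to reduce the claim to the explicit form of $\mu_{\alpha}^{(2)}(0)$ already obtained in \hyperref[prop1]{Proposition \ref{prop1}} and to the multiplicativity of the quaternion norm. First I would recall that $\mu_{\alpha}^{(1)}(0)=0$, so the only nontrivial representative of the zero class comes from $\mu_{\alpha}^{(2)}(0)$. Writing $\alpha=\alpha_{1}+\alpha_{2}i+\alpha_{3}j+\alpha_{4}k$, part (ii) of \hyperref[prop1]{Proposition \ref{prop1}} shows that $\mu_{\alpha}^{(2)}(0)=-\alpha\rho$, where $\rho=\tfrac{1}{2}(1+i+j+k)$. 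The whole proof then hinges on identifying $\rho$ correctly.

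The key observation is that $\rho$ is a \emph{unit} of $\mathcal{H}$: indeed $N(\rho)=\tfrac{1}{4}(1+1+1+1)=1$. Since the quaternion norm satisfies $N(ab)=N(a)N(b)$ for all $a,b$, it follows at once that $N(\mu_{\alpha}^{(2)}(0))=N(-\alpha)\,N(\rho)=N(\alpha)$, using $N(-\alpha)=N(\alpha)$. Because the norm is insensitive to the order of the factors, the non-commutativity of quaternion multiplication causes no difficulty in this step, and the conclusion $N(\mu_{\alpha}(0))=N(\alpha)$ follows.

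As a self-contained alternative that avoids invoking multiplicativity, I would instead substitute the four rational components of $\mu_{\alpha}^{(2)}(0)$ displayed in the proof of \hyperref[prop1]{Proposition \ref{prop1}} directly into $N(\,\cdot\,)$, the sum of the squares of the four components, and expand. Each squared component contributes $\tfrac{1}{4}$ of a sum of the $\alpha_{t}^{2}$ together with signed cross terms $\alpha_{s}\alpha_{t}$; the point to verify is that, upon adding the four squares, every off-diagonal term $\alpha_{s}\alpha_{t}$ with $s\neq t$ occurs with total coefficient zero, leaving $\tfrac{1}{4}\cdot 4(\alpha_{1}^{2}+\alpha_{2}^{2}+\alpha_{3}^{2}+\alpha_{4}^{2})=N(\alpha)$. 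I expect the bookkeeping of these sign cancellations to be the only delicate part of the argument; grouping the eight off-diagonal pairs according to the symmetric sign pattern of the coefficients makes the cancellation transparent and recovers exactly the value predicted by the cleaner unit-factorization route.
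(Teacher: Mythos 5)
Your proposal is correct, and your primary argument takes a genuinely different route from the paper's. The paper proves the claim by brute force: it substitutes the four explicit components of $\mu_{\alpha}^{(2)}(0)$ from eq. \ref{equu8} into the norm, expands the four squared half-integer combinations, and verifies that every cross term $\alpha_{s}\alpha_{t}$ with $s\neq t$ cancels, leaving $\tfrac{1}{4}\cdot 4(\alpha_{1}^{2}+\alpha_{2}^{2}+\alpha_{3}^{2}+\alpha_{4}^{2})=N(\alpha)$ --- which is precisely your fallback alternative. Your main argument instead factors $\mu_{\alpha}^{(2)}(0)=-\alpha\rho$ with $\rho=\tfrac{1}{2}(1+i+j+k)$, notes $N(\rho)=1$ (indeed $\rho\in\mathcal{U}_{\mathcal{H}}$, cf.\ Section \ref{sec6}), and invokes multiplicativity of the quaternion norm together with $N(-\alpha)=N(\alpha)$; your remark that non-commutativity is harmless here is apt, since $N(ab)=N(a)N(b)$ holds on all of $\mathbb{H}$. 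What your route buys is brevity, immunity to sign-bookkeeping errors, and generality: it shows at once that any residue of the form $\alpha u$ with $u$ a unit has norm exactly $N(\alpha)$, which explains conceptually why the answer is $N(\alpha)$ rather than merely confirming it. What the paper's expansion buys is self-containedness --- it never needs the multiplicativity of the norm, which the paper states nowhere --- and it re-derives the explicit component formula that Proposition \ref{prop7} later leans on. One caveat applies equally to both: as written, eq. \ref{equ6} forces $\mu_{\alpha}(0)=\mu_{\alpha}^{(1)}(0)=0$ (this is what Theorem \ref{thm2} uses), so the proposition is really about $\mu_{\alpha}^{(2)}(0)$; you and the paper make the same implicit identification, so your treatment is faithful to the intended claim.
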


\begin{proof}
Let $\alpha$ be a prime Hurwitz integer. From \hyperref[equu8]{eq. \ref{equu8}}, we know that
\begin{equation}\label{equ79}
\mu_{\alpha}(0)=\frac{-\alpha_{1}+\alpha_{2}+\alpha_{3}+\alpha_{4}}{2} + \frac{-\alpha_{1} - \alpha_{2} - \alpha_{3} + \alpha_{4}}{2}i + \frac{-\alpha_{1} + \alpha_{2} - \alpha_{3} - \alpha_{4}}{2}j + \frac{-\alpha_{1} - \alpha_{2} + \alpha_{3} - \alpha_{4}}{2}k.
\end{equation}
Hereby,
\begin{equation}\label{equ80}
\begin{array}{lll}
  N(\mu_{\alpha}(0)) & = & N(\frac{-\alpha_{1}+\alpha_{2}+\alpha_{3}+\alpha_{4}}{2} + \frac{-\alpha_{1} - \alpha_{2} - \alpha_{3} + \alpha_{4}}{2}i + \frac{-\alpha_{1} + \alpha_{2} - \alpha_{3} - \alpha_{4}}{2}j + \frac{-\alpha_{1} - \alpha_{2} + \alpha_{3} - \alpha_{4}}{2}k) \\
   & = & (\frac{-\alpha_{1}+\alpha_{2}+\alpha_{3}+\alpha_{4}}{2})^{2} + (\frac{-\alpha_{1} - \alpha_{2} - \alpha_{3} + \alpha_{4}}{2})^{2} + (\frac{-\alpha_{1} + \alpha_{2} - \alpha_{3} - \alpha_{4}}{2})^{2} + (\frac{-\alpha_{1} - \alpha_{2} + \alpha_{3} - \alpha_{4}}{2})^{2} \\
   & = & \frac{\alpha^{2}_{1} + \alpha^{2}_{2} + \alpha^{2}_{3} + \alpha^{2}_{4} + 2\alpha_{3}\alpha_{4} + 2\alpha_{2}\alpha_{3} + 2\alpha_{2}\alpha_{4} - 2\alpha_{1}\alpha_{2} - 2\alpha_{1}\alpha_{3} - 2\alpha_{1}\alpha_{4}}{4} \\
   &   & + \frac{\alpha^{2}_{1} + \alpha^{2}_{2} + \alpha^{2}_{3} + \alpha^{2}_{4} - 2\alpha_{3}\alpha_{4} + 2\alpha_{2}\alpha_{3} - 2\alpha_{2}\alpha_{4} + 2\alpha_{1}\alpha_{2} + 2\alpha_{1}\alpha_{3} - 2\alpha_{1}\alpha_{4}}{4} \\
   &   & + \frac{\alpha^{2}_{1} + \alpha^{2}_{2} + \alpha^{2}_{3} + \alpha^{2}_{4} + 2\alpha_{3}\alpha_{4} - 2\alpha_{2}\alpha_{3} - 2\alpha_{2}\alpha_{4} - 2\alpha_{1}\alpha_{2} + 2\alpha_{1}\alpha_{3} + 2\alpha_{1}\alpha_{4}}{4} \\
   &   & + \frac{\alpha^{2}_{1} + \alpha^{2}_{2} + \alpha^{2}_{3} + \alpha^{2}_{4} - 2\alpha_{3}\alpha_{4} - 2\alpha_{2}\alpha_{3} + 2\alpha_{2}\alpha_{4} + 2\alpha_{1}\alpha_{2} - 2\alpha_{1}\alpha_{3} + 2\alpha_{1}\alpha_{4}}{4} \\
   & =  & \frac{4(\alpha^{2}_{1} + \alpha^{2}_{2} + \alpha^{2}_{3} + \alpha^{2}_{4})}{4} \\
\end{array}
\end{equation}
Since $N(\alpha)=\alpha^{2}_{1} + \alpha^{2}_{2} + \alpha^{2}_{3} + \alpha^{2}_{4},$ then,
\begin{equation}\label{equ81}
\begin{array}{lll}
  N(\mu_{\alpha}(0)) & = & \alpha^{2}_{1} + \alpha^{2}_{2} + \alpha^{2}_{3} + \alpha^{2}_{4} \\
  & = & N(\alpha).
\end{array}
\end{equation}
This completes the proof.
\end{proof}

\begin{proposition} \label{prop6}
Let $\alpha=\alpha_{1}+\alpha_{2}i+\alpha_{3}j+\alpha_{4}k$ be a prime Hurwitz integer. If $\alpha_{1},\alpha_{2},\alpha_{3},\alpha_{4} \in \mathbb{Z},$ then,
\begin{equation}\label{equ82}
\mu^{(1)}_{\alpha}(N(\alpha)) = 0,
\end{equation}
and
\begin{equation}\label{equ83}
\mu^{(2)}_{\alpha}(N(\alpha)) \equiv 0 \mod \alpha.
\end{equation}
If $\alpha_{1},\alpha_{2},\alpha_{3},\alpha_{4} \in \mathbb{Z}+\frac{1}{2},$ then
\begin{equation}\label{equ84}
\mu^{(1)}_{\alpha}(N(\alpha)) \equiv 0 \mod \alpha,
\end{equation}
and
\begin{equation}\label{equ85}
\mu^{(2)}_{\alpha}(N(\alpha)) = 0.
\end{equation}
\end{proposition}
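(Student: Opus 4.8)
The plan is to exploit the special structure of the argument $z=N(\alpha)$. The crucial observation is that $\frac{\overline{\alpha}\,N(\alpha)}{N(\alpha)}=\overline{\alpha}$, so in both $\mu^{(1)}_{\alpha}(N(\alpha))$ and $\mu^{(2)}_{\alpha}(N(\alpha))$ the rounding notations are applied directly to the conjugate $\overline{\alpha}=\alpha_{1}-\alpha_{2}i-\alpha_{3}j-\alpha_{4}k$. First I would record that $\overline{\alpha}$ lies in the same family as $\alpha$: if $\alpha_{1},\alpha_{2},\alpha_{3},\alpha_{4}\in\mathbb{Z}$ then every component of $\overline{\alpha}$ is an integer, while if $\alpha_{1},\alpha_{2},\alpha_{3},\alpha_{4}\in\mathbb{Z}+\frac{1}{2}$ then every component of $\overline{\alpha}$ remains in $\mathbb{Z}+\frac{1}{2}$ (negation preserves that class). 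This is exactly what is needed to feed $\beta=\overline{\alpha}$ into \hyperref[prop2]{proposition \ref{prop2}} in each case.

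For the integer case, \hyperref[prop2]{proposition \ref{prop2}} gives $\lfloor \overline{\alpha} \rceil=\overline{\alpha}$, so $\mu^{(1)}_{\alpha}(N(\alpha))=N(\alpha)-\alpha\overline{\alpha}=N(\alpha)-N(\alpha)=0$, which is \hyperref[equ82]{eq. \ref{equ82}}. The same proposition yields $\lfloor\lfloor \overline{\alpha} \rceil\rceil=\overline{\alpha}\pm\frac{1}{2}\pm\frac{1}{2}i\pm\frac{1}{2}j\pm\frac{1}{2}k$, so after cancelling $N(\alpha)-\alpha\overline{\alpha}$ we obtain $\mu^{(2)}_{\alpha}(N(\alpha))=-\alpha\left(\pm\frac{1}{2}\pm\frac{1}{2}i\pm\frac{1}{2}j\pm\frac{1}{2}k\right)$. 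The half-integer case is perfectly symmetric with the two rounding notations interchanged: \hyperref[prop2]{proposition \ref{prop2}} now gives $\lfloor\lfloor \overline{\alpha} \rceil\rceil=\overline{\alpha}$, hence $\mu^{(2)}_{\alpha}(N(\alpha))=0$ directly (\hyperref[equ85]{eq. \ref{equ85}}), and $\lfloor \overline{\alpha} \rceil=\overline{\alpha}\pm\frac{1}{2}\pm\frac{1}{2}i\pm\frac{1}{2}j\pm\frac{1}{2}k$, hence $\mu^{(1)}_{\alpha}(N(\alpha))=-\alpha\left(\pm\frac{1}{2}\pm\frac{1}{2}i\pm\frac{1}{2}j\pm\frac{1}{2}k\right)$.

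The only step beyond bookkeeping is verifying the two congruence claims \hyperref[equ83]{eq. \ref{equ83}} and \hyperref[equ84]{eq. \ref{equ84}}. Here I would observe that the correction vector $\pm\frac{1}{2}\pm\frac{1}{2}i\pm\frac{1}{2}j\pm\frac{1}{2}k$ has all four components in $\mathbb{Z}+\frac{1}{2}$ and is therefore itself a Hurwitz integer, say $h\in\mathcal{H}$. Consequently both residues above have the form $-\alpha h=\alpha(-h)$, i.e. a left multiple of $\alpha$, so they are $\equiv 0\bmod\alpha$ under the left congruence fixed in \hyperref[def7]{definition \ref{def7}}. This is precisely the mechanism already used for $\mu^{(2)}_{\alpha}(0)$ in \hyperref[prop1]{proposition \ref{prop1}}. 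The main obstacle is merely cosmetic: keeping the $\pm$ sign choices consistent across the four components, which are determined by the rounding convention stated after \hyperref[ex2]{example \ref{ex2}}; since those signs do not affect membership of $h$ in $\mathcal{H}$, they do not affect the divisibility conclusion, and the proof closes.
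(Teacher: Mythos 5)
Your proposal is correct and follows essentially the same route as the paper's own proof: reduce $\frac{\overline{\alpha}N(\alpha)}{N(\alpha)}$ to $\overline{\alpha}$, apply \hyperref[prop2]{proposition \ref{prop2}} with $\beta=\overline{\alpha}$ in each of the two cases, cancel $N(\alpha)-\alpha\overline{\alpha}$, and recognize the leftover term $-\alpha\left(\pm\frac{1}{2}\pm\frac{1}{2}i\pm\frac{1}{2}j\pm\frac{1}{2}k\right)$ as $\alpha$ times a Hurwitz integer, hence $\equiv 0 \bmod \alpha$. If anything, you are slightly more careful than the paper, which silently fixes all the $\pm$ signs to $+$ and does not spell out that $\overline{\alpha}$ stays in the same class ($\mathbb{Z}$ or $\mathbb{Z}+\frac{1}{2}$) as $\alpha$, nor that the correction vector is itself a unit Hurwitz integer justifying the congruence.
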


\begin{proof}
Let $\alpha=\alpha_{1}+\alpha_{2}i+\alpha_{3}j+\alpha_{4}k$ be a prime Hurwitz integer. If $\alpha_{1},\alpha_{2},\alpha_{3},\alpha_{4} \in \mathbb{Z},$ then, from \hyperref[equ4]{eq. \ref{equ4}},
\begin{equation}\label{equ86}
\begin{array}{lll}
  \mu^{(1)}_{\alpha}(N(\alpha)) & = & N(\alpha) - \alpha \lfloor \frac{\overline \alpha N(\alpha)}{N(\alpha)} \rceil \\
    & = &  N(\alpha) - \alpha \lfloor \overline \alpha \rceil.
\end{array}
\end{equation}
From \hyperref[equ35]{eq. \ref{equ35}},
\begin{equation}\label{equ87}
\begin{array}{lll}
  \mu^{(1)}_{\alpha}(N(\alpha)) & = & N(\alpha) - \alpha \overline \alpha  \\
    & = &  N(\alpha) - N(\alpha) \\
    & = & 0.
\end{array}
\end{equation}
From \hyperref[equ5]{eq. \ref{equ5}},
\begin{equation}\label{equ89}
\begin{array}{lll}
  \mu^{(2)}_{\alpha}(N(\alpha)) & = & N(\alpha) - \alpha \lfloor \lfloor \frac{\overline \alpha N(\alpha)}{N(\alpha)} \rceil \rceil \\
    & = &  N(\alpha) - \alpha \lfloor \lfloor \overline \alpha \rceil \rceil.
\end{array}
\end{equation}
From \hyperref[equ36]{eq. \ref{equ36}},
\begin{equation}\label{equ90}
\begin{array}{lll}
  \mu^{(2)}_{\alpha}(N(\alpha)) & = & N(\alpha) - \alpha \left( \overline \alpha + \frac{1}{2}+\frac{1}{2}i+\frac{1}{2}j+\frac{1}{2}k \right)  \\
    & = &  N(\alpha) - N(\alpha) - \alpha \left( \frac{1}{2}+\frac{1}{2}i+\frac{1}{2}j+\frac{1}{2}k \right) \\
    & = & - \alpha \left( \frac{1}{2}+\frac{1}{2}i+\frac{1}{2}j+\frac{1}{2}k \right) \\
    & \equiv & 0 \mod \alpha
\end{array}
\end{equation}
If $\alpha_{1},\alpha_{2},\alpha_{3},\alpha_{4} \in \mathbb{Z}+\frac{1}{2},$ then, from \hyperref[equ4]{eq. \ref{equ4}},
\begin{equation}\label{equ91}
\begin{array}{lll}
  \mu^{(1)}_{\alpha}(N(\alpha)) & = & N(\alpha) - \alpha \lfloor \frac{\overline \alpha N(\alpha)}{N(\alpha)} \rceil \\
    & = &  N(\alpha) - \alpha \lfloor \overline \alpha \rceil.
\end{array}
\end{equation}
From \hyperref[equ37]{eq. \ref{equ37}},
\begin{equation}\label{equ92}
\begin{array}{lll}
  \mu^{(1)}_{\alpha}(N(\alpha)) & = & N(\alpha) - \alpha \left( \overline \alpha + \frac{1}{2}+\frac{1}{2}i+\frac{1}{2}j+\frac{1}{2}k \right)  \\
    & = &  N(\alpha) - N(\alpha) - \alpha \left( \frac{1}{2}+\frac{1}{2}i+\frac{1}{2}j+\frac{1}{2}k \right) \\
    & = & - \alpha \left( \frac{1}{2}+\frac{1}{2}i+\frac{1}{2}j+\frac{1}{2}k \right) \\
    & \equiv & 0 \mod \alpha.
\end{array}
\end{equation}
From \hyperref[equ5]{eq. \ref{equ5}},
\begin{equation}\label{equ93}
\begin{array}{lll}
  \mu^{(2)}_{\alpha}(N(\alpha)) & = & N(\alpha) - \alpha \lfloor \lfloor \frac{\overline \alpha N(\alpha)}{N(\alpha)} \rceil \rceil \\
    & = &  N(\alpha) - \alpha \lfloor \lfloor \overline \alpha \rceil \rceil.
\end{array}
\end{equation}
From \hyperref[equ38]{eq. \ref{equ38}},
\begin{equation}\label{equ94}
\begin{array}{lll}
  \mu^{(2)}_{\alpha}(N(\alpha)) & = & N(\alpha) - \alpha \overline \alpha  \\
    & = &  N(\alpha) - N(\alpha) \\
    & = & 0.
\end{array}
\end{equation}
This completes the proof.
\end{proof}

\begin{proposition}\label{prop7}
Let $\alpha=\alpha_{1}+\alpha_{2}i+\alpha_{3}j+\alpha_{4}k$ be a prime Hurwitz integer. Then,
\begin{equation}\label{}
\mu_{\alpha}(z)+\mu_{\alpha}(N(\alpha)-z)\equiv 0 \mod \alpha
\end{equation}
where $z\in \mathbb{Z}_{N(\alpha)}.$
\end{proposition}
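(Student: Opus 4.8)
The plan is to bypass the three-way case split of \hyperref[equ6]{eq. \ref{equ6}} entirely and instead use the single feature common to both branches of the minimum: that $\mu_{\alpha}(w) \equiv w \bmod \alpha$ for \emph{every} argument $w$. Indeed, whether $\mu_{\alpha}(w)$ equals $\mu^{(1)}_{\alpha}(w) = w - \alpha \lfloor \frac{\overline\alpha w}{N(\alpha)} \rceil$ or $\mu^{(2)}_{\alpha}(w) = w - \alpha \lfloor \lfloor \frac{\overline\alpha w}{N(\alpha)} \rceil \rceil$, the rounding term is a Hurwitz integer, so in both cases $w - \mu_{\alpha}(w)$ is a left multiple of $\alpha$; hence $\mu_{\alpha}(w) \equiv w \bmod \alpha$ in the sense of the left congruence adopted after \hyperref[def7]{definition \ref{def7}}. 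I would state and justify this observation first, because it collapses the proposition to a one-line computation and sidesteps the minimum-selection rule completely.

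The second step is to add the two congruences obtained by taking $w = z$ and $w = N(\alpha)-z$:
\begin{equation}\nonumber
\mu_{\alpha}(z) + \mu_{\alpha}\!\left(N(\alpha)-z\right) \equiv z + \left( N(\alpha) - z \right) = N(\alpha) \bmod \alpha.
\end{equation}
Since $N(\alpha) = \alpha\overline\alpha$ is itself a left multiple of $\alpha$, we have $N(\alpha) \equiv 0 \bmod \alpha$, which is exactly the assertion. Alternatively, one may route the same computation through \hyperref[thm1]{theorem \ref{thm1}}: additivity of $\mu_{\alpha}$ gives $\mu_{\alpha}(z) + \mu_{\alpha}(N(\alpha)-z) = \mu_{\alpha}\!\left(z + (N(\alpha)-z)\right) = \mu_{\alpha}(0) = 0$ as residue classes in $\mathcal{H}_{\alpha}$, which is the same statement phrased inside the quotient ring.

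The only point requiring attention is the boundary value $z = 0$, for which the second argument $N(\alpha) - z = N(\alpha)$ lies outside the canonical representative set $\{0,1,\dots,N(\alpha)-1\}$ of $\mathbb{Z}_{N(\alpha)}$. I would dispatch it separately using the results already available: $\mu_{\alpha}(0) = 0$, while \hyperref[prop6]{proposition \ref{prop6}} gives $\mu^{(1)}_{\alpha}(N(\alpha)) \equiv 0$ and $\mu^{(2)}_{\alpha}(N(\alpha)) \equiv 0 \bmod \alpha$, so $\mu_{\alpha}(N(\alpha)) \equiv 0$ and the sum is again $\equiv 0 \bmod \alpha$. I expect this bookkeeping, together with keeping straight that the relevant relation is the \emph{left} congruence — under which $N(\alpha) = \alpha\overline\alpha$ is indeed $\equiv 0$ — to be the only genuine obstacle; once the uniform congruence $\mu_{\alpha}(w) \equiv w \bmod \alpha$ is in place, the remainder of the argument is immediate.
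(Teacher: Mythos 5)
Your proof is correct, and it takes a genuinely different route from the paper's. The paper proves this proposition by splitting into four cases according to which branch of the minimum in eq.~(\ref{equ6}) is selected for $z$ and for $N(\alpha)-z$, expanding the rounding expressions in each case, and then invoking the computations behind \hyperref[prop6]{proposition \ref{prop6}} (that $N(\alpha)-\alpha\lfloor \overline\alpha\rceil$ and $N(\alpha)-\alpha\lfloor\lfloor\overline\alpha\rceil\rceil$ are $0$ or $\equiv 0 \bmod \alpha$ according as the components of $\alpha$ lie in $\mathbb{Z}$ or $\mathbb{Z}+\frac{1}{2}$). Your single observation --- that both $\lfloor\cdot\rceil$ and $\lfloor\lfloor\cdot\rceil\rceil$ return elements of $\mathcal{H}$, so $w-\mu_{\alpha}(w)\in\alpha\mathcal{H}$ whichever branch the minimum selects, i.e.\ $\mu_{\alpha}(w)\equiv w \bmod \alpha$ uniformly --- collapses all four cases at once, and the conclusion follows by adding the two congruences and using $N(\alpha)=\alpha\overline\alpha\equiv 0 \bmod \alpha$. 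This buys more than brevity: the paper's computation (e.g.\ eq.~(\ref{equ96})) distributes the rounding operator over the sum $\overline\alpha - \frac{\overline\alpha z}{N(\alpha)}$ and pulls signs out of roundings, identities that fail for general arguments (rounding is not additive) and need the special integer or half-integer form of $\overline\alpha$, plus the paper's tie-breaking convention, to be justified; your argument needs no such identity, since the additive-subgroup property of $\alpha\mathcal{H}$ is all that is used. Your separate treatment of $z=0$ is harmless but not actually needed: the same uniform congruence gives $\mu_{\alpha}(N(\alpha))\equiv N(\alpha)\equiv 0 \bmod \alpha$ directly, without appealing to \hyperref[prop6]{proposition \ref{prop6}}. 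One small caution about your alternative ending via \hyperref[thm1]{theorem \ref{thm1}}: the identity $\mu_{\alpha}(z_{1}+z_{2})=\mu_{\alpha}(z_{1})+\mu_{\alpha}(z_{2})$ there is really an equality of residue classes (the representatives themselves need not be equal, as the paper's own proof of that theorem only establishes agreement modulo $\alpha$), so your primary argument, which works directly with the congruence, is the cleaner and safer of the two routes you sketch.
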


\begin{proof}
Let $\alpha$ be a prime Hurwitz integer. Firstly, we suppose that $\mu_{\alpha}(z)=\mu^{(1)}_{\alpha}(z)$ and $\mu_{\alpha} (N(\alpha)-z) = \mu^{(1)}_{\alpha}(N(\alpha)-z)$ where $z\in\mathbb{Z}_{N(\alpha)}.$ Then,
\begin{equation}\label{equ95}
\begin{array}{lll}
\mu_{\alpha}(z)+\mu_{\alpha}(N(\alpha)-z) & = & \mu^{(1)}_{\alpha}(z) + \mu^{(1)}_{\alpha}(N(\alpha)-z).
\end{array}
\end{equation}
From \hyperref[equ4]{eq. \ref{equ4}},
\begin{equation}\label{equ96}
\begin{array}{lll}
  \mu_{\alpha}(z)+\mu_{\alpha}(N(\alpha)-z) & = & z-\alpha\lfloor\frac{\overline \alpha z}{N(\alpha)}\rceil + N(\alpha)-z - \alpha\lfloor \frac{\overline \alpha (N(\alpha)-z)}{N(\alpha)} \rceil \\
  & = & N(\alpha) - \alpha\lfloor\frac{\overline \alpha z}{N(\alpha)}\rceil - \alpha\lfloor \frac{\overline \alpha N(\alpha)}{N(\alpha)} \rceil - \alpha \lfloor \frac{\overline \alpha (-z)}{N(\alpha)} \rceil \\
  & = & N(\alpha) - \alpha\lfloor\frac{\overline \alpha z}{N(\alpha)}\rceil - \alpha\lfloor \overline \alpha \rceil + \alpha \lfloor \frac{\overline \alpha z}{N(\alpha)} \rceil \\
  & = & N(\alpha) - \alpha\lfloor \overline \alpha \rceil.
\end{array}
\end{equation}
If $\alpha_{1},\alpha_{2},\alpha_{3},\alpha_{4} \in \mathbb{Z},$ then, from \hyperref[equ86]{eq. \ref{equ86}} and \hyperref[equ87]{eq. \ref{equ87}},
\begin{equation}\label{equ97}
\begin{array}{lll}
  \mu_{\alpha}(z)+\mu_{\alpha}(N(\alpha)-z) & = & 0 \\
  & \equiv & 0 \mod \alpha.
\end{array}
\end{equation}
If $\alpha_{1},\alpha_{2},\alpha_{3},\alpha_{4} \in \mathbb{Z}+\frac{1}{2},$ then, from \hyperref[equ91]{eq. \ref{equ91}} and \hyperref[equ92]{eq. \ref{equ92}},
\begin{equation}\label{equ98}
\begin{array}{lll}
\mu_{\alpha}(z)+\mu_{\alpha}(N(\alpha)-z) & \equiv & 0 \mod \alpha.
\end{array}
\end{equation}
Secondly, we suppose that $\mu_{\alpha}(z)=\mu^{(1)}_{\alpha}(z)$ and $\mu_{\alpha} (N(\alpha)-z) = \mu^{(2)}_{\alpha}(N(\alpha)-z)$ where $z\in\mathbb{Z}_{N(\alpha)}.$ Then,
\begin{equation}\label{equ99}
\begin{array}{lll}
  \mu_{\alpha}(z)+\mu_{\alpha}(N(\alpha)-z) & = & \mu^{(1)}_{\alpha}(z) + \mu^{(2)}_{\alpha}(N(\alpha)-z).
\end{array}
\end{equation}
From \hyperref[equ4]{eq. \ref{equ4}} and \hyperref[equ5]{eq. \ref{equ5}},
\begin{equation}\label{equ100}
\begin{array}{lll}
  \mu_{\alpha}(z)+\mu_{\alpha}(N(\alpha)-z) & = & z-\alpha\lfloor\frac{\overline \alpha z}{N(\alpha)}\rceil + N(\alpha)-z - \alpha \lfloor \lfloor \frac{\overline \alpha (N(\alpha)-z)}{N(\alpha)} \rceil \rceil \\
  & = & N(\alpha) - \alpha \lfloor \frac{\overline \alpha z}{N(\alpha)} \rceil - \alpha \lfloor\lfloor \frac{\overline \alpha N(\alpha)}{N(\alpha)} \rceil \rceil - \alpha \lfloor \lfloor \frac{\overline \alpha (-z)}{N(\alpha)} \rceil \rceil \\
  & = & N(\alpha) - \alpha \lfloor \frac{\overline \alpha z}{N(\alpha)} \rceil - \alpha \lfloor \lfloor \overline \alpha \rceil \rceil + \alpha \lfloor \lfloor \frac{\overline \alpha z}{N(\alpha)} \rceil \rceil.
\end{array}
\end{equation}
If $\alpha_{1},\alpha_{2},\alpha_{3},\alpha_{4} \in \mathbb{Z},$ then, from \hyperref[equ89]{eq. \ref{equ89}} and \hyperref[equ90]{eq. \ref{equ90}},
\begin{equation}\label{equ101}
\begin{array}{lll}
  \mu_{\alpha}(z)+\mu_{\alpha}(N(\alpha)-z) & = & - \alpha \lfloor \frac{\overline \alpha z}{N(\alpha)} \rceil + \alpha \lfloor \lfloor \frac{\overline \alpha z}{N(\alpha)} \rceil \rceil.
\end{array}
\end{equation}
There exist $\lambda_{1},\lambda_{2}\in \mathcal{H}$ such that $\lfloor \frac{\overline \alpha z}{N(\alpha)} \rceil = \lambda_{1} $ and $\alpha \lfloor \lfloor \frac{\overline \alpha z}{N(\alpha)} \rceil \rceil=\lambda_{2}.$ Hereby,
\begin{equation}\label{equ102}
\begin{array}{lll}
  \mu_{\alpha}(z)+\mu_{\alpha}(N(\alpha)-z) & = & - \alpha \lambda_{1} + \alpha \lambda_{2} \\
  & = & \alpha (-\lambda_{1} + \lambda_{2}) \\
  & \equiv & 0 \mod \alpha.
\end{array}
\end{equation}
If $\alpha_{1},\alpha_{2},\alpha_{3},\alpha_{4} \in \mathbb{Z}+\frac{1}{2},$ then, from \hyperref[equ93]{eq. \ref{equ93}} and \hyperref[equ94]{eq. \ref{equ94}},
\begin{equation}\label{equ103}
\begin{array}{lll}
  \mu_{\alpha}(z)+\mu_{\alpha}(N(\alpha)-z) & = & - \alpha \lfloor \frac{\overline \alpha z}{N(\alpha)} \rceil + \alpha \lfloor \lfloor \frac{\overline \alpha z}{N(\alpha)} \rceil \rceil.
\end{array}
\end{equation}
There exist $\lambda_{1},\lambda_{2}\in \mathcal{H}$ such that $\lfloor \frac{\overline \alpha z}{N(\alpha)} \rceil = \lambda_{1} $ and $\alpha \lfloor \lfloor \frac{\overline \alpha z}{N(\alpha)} \rceil \rceil=\lambda_{2}.$ Hereby,
\begin{equation}\label{equ104}
\begin{array}{lll}
  \mu_{\alpha}(z)+\mu_{\alpha}(N(\alpha)-z) & = & - \alpha \lambda_{1} + \alpha \lambda_{2} \\
  & = & \alpha (-\lambda_{1} + \lambda_{2}) \\
  & \equiv & 0 \mod \alpha.
\end{array}
\end{equation}
Thirdly, we suppose that $\mu_{\alpha}(z)=\mu^{(2)}_{\alpha}(z)$ and $\mu_{\alpha} (N(\alpha)-z) = \mu^{(1)}_{\alpha}(N(\alpha)-z)$ where $z\in\mathbb{Z}_{N(\alpha)}.$ Then,
\begin{equation}\label{equ105}
\begin{array}{lll}
  \mu_{\alpha}(z)+\mu_{\alpha}(N(\alpha)-z) & = & \mu^{(2)}_{\alpha}(z) + \mu^{(1)}_{\alpha}(N(\alpha)-z).
\end{array}
\end{equation}
From \hyperref[equ5]{eq. \ref{equ5}} and \hyperref[equ4]{eq. \ref{equ4}},
\begin{equation}\label{equ106}
\begin{array}{lll}
  \mu_{\alpha}(z)+\mu_{\alpha}(N(\alpha)-z) & = & z-\alpha\lfloor\lfloor\frac{\overline \alpha z}{N(\alpha)}\rceil\rceil + N(\alpha)-z - \alpha \lfloor \frac{\overline \alpha (N(\alpha)-z)}{N(\alpha)} \rceil \\
  & = & N(\alpha) - \alpha \lfloor \lfloor \frac{\overline \alpha z}{N(\alpha)} \rceil \rceil - \alpha \lfloor \frac{\overline \alpha N(\alpha)}{N(\alpha)} \rceil - \alpha \lfloor \frac{\overline \alpha (-z)}{N(\alpha)} \rceil \\
  & = & N(\alpha) - \alpha \lfloor \lfloor \frac{\overline \alpha z}{N(\alpha)} \rceil \rceil - \alpha \lfloor \overline \alpha \rceil + \alpha \lfloor \frac{\overline \alpha z}{N(\alpha)} \rceil.
\end{array}
\end{equation}
If $\alpha_{1},\alpha_{2},\alpha_{3},\alpha_{4} \in \mathbb{Z},$ then, from \hyperref[equ86]{eq. \ref{equ86}} and \hyperref[equ87]{eq. \ref{equ87}},
\begin{equation}\label{equ107}
\begin{array}{lll}
  \mu_{\alpha}(z)+\mu_{\alpha}(N(\alpha)-z) & = & - \alpha \lfloor \lfloor \frac{\overline \alpha z}{N(\alpha)} \rceil \rceil + \alpha \lfloor \frac{\overline \alpha z}{N(\alpha)} \rceil.
\end{array}
\end{equation}
There exist $\lambda_{1},\lambda_{2}\in \mathcal{H}$ such that $\lfloor\lfloor \frac{\overline \alpha z}{N(\alpha)} \rceil \rceil = \lambda_{1} $ and $\alpha \lfloor \frac{\overline \alpha z}{N(\alpha)} \rceil=\lambda_{2}.$ Hereby,
\begin{equation}\label{equ108}
\begin{array}{lll}
  \mu_{\alpha}(z)+\mu_{\alpha}(N(\alpha)-z) & = & - \alpha \lambda_{1} + \alpha \lambda_{2} \\
  & = & \alpha (-\lambda_{1} + \lambda_{2}) \\
  & \equiv & 0 \mod \alpha.
\end{array}
\end{equation}
If $\alpha_{1},\alpha_{2},\alpha_{3},\alpha_{4} \in \mathbb{Z}+\frac{1}{2},$ then, from \hyperref[equ91]{eq. \ref{equ91}} and \hyperref[equ92]{eq. \ref{equ92}},
\begin{equation}\label{equ109}
\begin{array}{lll}
  \mu_{\alpha}(z)+\mu_{\alpha}(N(\alpha)-z) & = & - \alpha \lfloor \lfloor \frac{\overline \alpha z}{N(\alpha)} \rceil \rceil + \alpha \lfloor \frac{\overline \alpha z}{N(\alpha)} \rceil.
\end{array}
\end{equation}
There exist $\lambda_{1},\lambda_{2}\in \mathcal{H}$ such that $\lfloor \lfloor \frac{\overline \alpha z}{N(\alpha)} \rceil \rceil = \lambda_{1} $ and $\alpha \lfloor \frac{\overline \alpha z}{N(\alpha)} \rceil=\lambda_{2}.$ Hereby,
\begin{equation}\label{equ110}
\begin{array}{lll}
  \mu_{\alpha}(z)+\mu_{\alpha}(N(\alpha)-z) & = & - \alpha \lambda_{1} + \alpha \lambda_{2} \\
  & = & \alpha (-\lambda_{1} + \lambda_{2}) \\
  & \equiv & 0 \mod \alpha.
\end{array}
\end{equation}
Lastly, we suppose that $\mu_{\alpha}(z)=\mu^{(2)}_{\alpha}(z)$ and $\mu_{\alpha} (N(\alpha)-z) = \mu^{(2)}_{\alpha}(N(\alpha)-z)$ where $z\in\mathbb{Z}_{N(\alpha)}.$ Then,
\begin{equation}\label{equ11}
\begin{array}{lll}
\mu_{\alpha}(z)+\mu_{\alpha}(N(\alpha)-z) & = & \mu^{(2)}_{\alpha}(z) + \mu^{(12)}_{\alpha}(N(\alpha)-z).
\end{array}
\end{equation}
From \hyperref[equ4]{eq. \ref{equ4}},
\begin{equation}\label{equ112}
\begin{array}{lll}
  \mu_{\alpha}(z)+\mu_{\alpha}(N(\alpha)-z) & = & z-\alpha\lfloor\lfloor\frac{\overline \alpha z}{N(\alpha)}\rceil\rceil + N(\alpha)-z - \alpha\lfloor\lfloor \frac{\overline \alpha (N(\alpha)-z)}{N(\alpha)} \rceil \rceil \\
  & = & N(\alpha) - \alpha\lfloor\lfloor\frac{\overline \alpha z}{N(\alpha)}\rceil\rceil - \alpha\lfloor\lfloor \frac{\overline \alpha N(\alpha)}{N(\alpha)} \rceil\rceil - \alpha \lfloor\lfloor \frac{\overline \alpha (-z)}{N(\alpha)} \rceil\rceil \\
  & = & N(\alpha) - \alpha\lfloor\lfloor\frac{\overline \alpha z}{N(\alpha)}\rceil\rceil - \alpha\lfloor\lfloor \overline \alpha \rceil \rceil + \alpha \lfloor \lfloor \frac{\overline \alpha z}{N(\alpha)} \rceil \rceil \\
  & = & N(\alpha) - \alpha \lfloor \lfloor \overline \alpha \rceil \rceil.
\end{array}
\end{equation}
If $\alpha_{1},\alpha_{2},\alpha_{3},\alpha_{4} \in \mathbb{Z},$ then, from \hyperref[equ89]{eq. \ref{equ89}} and \hyperref[equ90]{eq. \ref{equ90}},
\begin{equation}\label{equ113}
\begin{array}{lll}
  \mu_{\alpha}(z)+\mu_{\alpha}(N(\alpha)-z) & = & 0 \\
  & \equiv & 0 \mod \alpha.
\end{array}
\end{equation}
If $\alpha_{1},\alpha_{2},\alpha_{3},\alpha_{4} \in \mathbb{Z}+\frac{1}{2},$ then, from \hyperref[equ93]{eq. \ref{equ93}} and \hyperref[equ94]{eq. \ref{equ94}},
\begin{equation}\label{equ114}
\begin{array}{lll}
\mu_{\alpha}(z)+\mu_{\alpha}(N(\alpha)-z) & \equiv & 0 \mod \alpha.
\end{array}
\end{equation}
\end{proof}
Consequently, $\mu_{\alpha}(z)+\mu_{\alpha}(N(\alpha)-z)\equiv 0 \mod \alpha.$ This completes the proof.

\section{Examples}\label{sec5}

\begin{example}
Let $\alpha=\frac{5}{2}+\frac{3}{2}i+\frac{3}{2}j+\frac{3}{2}k.$ $\alpha$ is a prime Hurwitz integer because of $N(\alpha)=13.$ From \hyperref[equ4]{eq. \ref{equ4}},
\begin{equation}\label{equ115}
\mathcal{H}^{(1)}_{\alpha} = \left\{ \begin{gathered} \mu^{(1)}_{\alpha}(0)=0,\mu^{(1)}_{\alpha}(1)=1,\mu^{(1)}_{\alpha}(2)=2,\mu^{(1)}_{\alpha}(3)=\frac{1}{2}-\frac{3}{2}i-\frac{3}{2}j-\frac{3}{2}k,\hfill \\
\mu^{(1)}_{\alpha}(4)=\frac{3}{2}-\frac{3}{2}i-\frac{3}{2}j-\frac{3}{2}k,\mu^{(1)}_{\alpha}(5)=-2+i+j+k,\mu^{(1)}_{\alpha}(6)=-1+i+j+k,\hfill \\
\mu^{(1)}_{\alpha}(7)=i+j+k,\mu^{(1)}_{\alpha}(8)=-\frac{3}{2}-\frac{1}{2}i-\frac{1}{2}j-\frac{1}{2}k,\mu^{(1)}_{\alpha}(9)=-\frac{1}{2}-\frac{1}{2}i-\frac{1}{2}j-\frac{1}{2}k,\hfill \\
\mu^{(1)}_{\alpha}(10)=\frac{1}{2}-\frac{1}{2}i-\frac{1}{2}j-\frac{1}{2}k,\mu^{(1)}_{\alpha}(11)=\frac{3}{2}-\frac{1}{2}i-\frac{1}{2}j-\frac{1}{2}k, \hfill \\
\mu^{(1)}_{\alpha}(12)=\frac{5}{2}-\frac{1}{2}i-\frac{1}{2}j-\frac{1}{2}k \hfill \end{gathered} \right\}.
\end{equation}
From \hyperref[equ5]{eq. \ref{equ5}},
\begin{equation}\label{equ116}
\mathcal{H}^{(2)}_{\alpha} = \left\{ \begin{gathered} \mu^{(2)}_{\alpha}(0)=1-2i-2j-2k,\mu^{(2)}_{\alpha}(1)=-\frac{5}{2}+\frac{1}{2}i+\frac{1}{2}j+\frac{1}{2}k, \hfill \\
\mu^{(2)}_{\alpha}(2)=-\frac{3}{2}+\frac{1}{2}i+\frac{1}{2}j+\frac{1}{2}k,\mu^{(2)}_{\alpha}(3)=-\frac{1}{2}+\frac{1}{2}i+\frac{1}{2}j+\frac{1}{2}k, \hfill \\
\mu^{(2)}_{\alpha}(4)=\frac{1}{2}+\frac{1}{2}i+\frac{1}{2}j+\frac{1}{2}k,\mu^{(2)}_{\alpha}(5)=\frac{3}{2}+\frac{1}{2}i+\frac{1}{2}j+\frac{1}{2}k, \hfill \\
\mu^{(2)}_{\alpha}(6)=-i-j-k, \mu^{(2)}_{\alpha}(7)=1-i-j-k, \mu^{(2)}_{\alpha}(8)=2-i-j-k,\hfill \\
\mu^{(2)}_{\alpha}(9)=-\frac{3}{2}+\frac{3}{2}i+\frac{3}{2}j+\frac{3}{2}k,\mu^{(2)}_{\alpha}(10)=-\frac{1}{2}+\frac{3}{2}i+\frac{3}{2}j+\frac{3}{2}k,\hfill \\
\mu^{(2)}_{\alpha}(11)=-2, \mu^{(2)}_{\alpha}(12)=-1 \hfill \end{gathered} \right\}.
\end{equation}
With respect to \hyperref[equ6]{eq. \ref{equ6}}
\begin{equation}\label{equ117}
\mathcal{H}_{\alpha} = \left\{ \begin{gathered} \mu_{\alpha}(0)=\mu^{(1)}_{\alpha}(0)=0,\mu_{\alpha}(1)=\mu^{(1)}_{\alpha}(1)=1,\mu_{\alpha}(2)=\mu^{(2)}_{\alpha}(2)=-\frac{3}{2}+\frac{1}{2}i+\frac{1}{2}j+\frac{1}{2}k,\hfill \\
\mu_{\alpha}(3)=\mu^{(2)}_{\alpha}(3)=-\frac{1}{2}+\frac{1}{2}i+\frac{1}{2}j+\frac{1}{2}k,\mu_{\alpha}(4)=\mu^{(2)}_{\alpha}(4)=\frac{1}{2}+\frac{1}{2}i+\frac{1}{2}j+\frac{1}{2}k,\hfill \\
\mu_{\alpha}(5)=\mu^{(2)}_{\alpha}(5)=\frac{3}{2}+\frac{1}{2}i+\frac{1}{2}j+\frac{1}{2}k,\mu_{\alpha}(6)=\mu^{(2)}_{\alpha}(6)=-i-j-k, \hfill \\
\mu_{\alpha}(7)=\mu^{(1)}_{\alpha}(7)=i+j+k,\mu_{\alpha}(8)=\mu^{(1)}_{\alpha}(8)=-\frac{3}{2}-\frac{1}{2}i-\frac{1}{2}j-\frac{1}{2}k,\hfill \\
\mu_{\alpha}(9)=\mu^{(1)}_{\alpha}(9)=-\frac{1}{2}-\frac{1}{2}i-\frac{1}{2}j-\frac{1}{2}k,\mu_{\alpha}(10)=\mu^{(1)}_{\alpha}(10)=\frac{1}{2}-\frac{1}{2}i-\frac{1}{2}j-\frac{1}{2}k, \hfill \\
\mu_{\alpha}(11)=\mu^{(1)}_{\alpha}(11)=\frac{3}{2}-\frac{1}{2}i-\frac{1}{2}j-\frac{1}{2}k,\mu_{\alpha}(12)=\mu^{(2)}_{\alpha}(12)=-1 \hfill \end{gathered} \right\}.
\end{equation}
From \hyperref[equ127]{eq. \ref{equ127}}, the average energy of $\mathcal{H}_{\frac{5}{2} + \frac{3}{2}i + \frac{3}{2}j + \frac{3}{2}k}$ is
\begin{equation}
\mathcal{E}_{\frac{5}{2} + \frac{3}{2}i + \frac{3}{2}j + \frac{3}{2}k} = \frac{24}{13} = 1.8462.
\end{equation}
\end{example}

\begin{example}
Let $\alpha=3+2i.$ $\alpha$ is a prime Hurwitz integer because of $N(\alpha)=13.$ From \hyperref[equ4]{eq. \ref{equ4}},
\begin{equation}\label{equ118}
\mathcal{H}^{(1)}_{\alpha} = \left\{ \begin{gathered} \mu^{(1)}_{\alpha}(0)=0,\mu^{(1)}_{\alpha}(1)=1,\mu^{(1)}_{\alpha}(2)=2,\mu^{(1)}_{\alpha}(3)=-2i,
\mu^{(1)}_{\alpha}(4)=-1+i,\mu^{(1)}_{\alpha}(5)=i, \hfill \\
\mu^{(1)}_{\alpha}(6)=1+i,\mu^{(1)}_{\alpha}(7)=-1-i,\mu^{(1)}_{\alpha}(8)=-i,\mu^{(1)}_{\alpha}(9)=1-i,\mu^{(1)}_{\alpha}(10)=2i, \hfill \\
\mu^{(1)}_{\alpha}(11)=-2, \mu^{(1)}_{\alpha}(12)=-1 \hfill \end{gathered} \right\}.
\end{equation}
From \hyperref[equ5]{eq. \ref{equ5}},
\begin{equation}\label{equ119}
\mathcal{H}^{(2)}_{\alpha} = \left\{ \begin{gathered} \mu^{(2)}_{\alpha}(0)=-\frac{1}{2}-\frac{5}{2}i-\frac{1}{2}j-\frac{5}{2}k,\mu^{(2)}_{\alpha}(1)=-\frac{3}{2}+\frac{1}{2}i-\frac{1}{2}j-\frac{5}{2}k, \hfill \\
\mu^{(2)}_{\alpha}(2)=-\frac{1}{2}+\frac{1}{2}i-\frac{1}{2}j-\frac{5}{2}k,\mu^{(2)}_{\alpha}(3)=\frac{1}{2}+\frac{1}{2}i-\frac{1}{2}j-\frac{5}{2}k, \hfill \\
\mu^{(2)}_{\alpha}(4)=\frac{3}{2}+\frac{1}{2}i-\frac{1}{2}j-\frac{5}{2}k,\mu^{(2)}_{\alpha}(5)=-\frac{1}{2}-\frac{3}{2}i-\frac{1}{2}j-\frac{5}{2}k, \hfill \\
\mu^{(2)}_{\alpha}(6)=\frac{1}{2}-\frac{3}{2}i-\frac{1}{2}j-\frac{5}{2}k,\mu^{(2)}_{\alpha}(7)=-\frac{1}{2}+\frac{3}{2}i-\frac{1}{2}j-\frac{5}{2}k, \hfill \\
\mu^{(2)}_{\alpha}(8)=\frac{1}{2}+\frac{3}{2}i-\frac{1}{2}j-\frac{5}{2}k,\mu^{(2)}_{\alpha}(9)=-\frac{3}{2}-\frac{1}{2}i-\frac{1}{2}j-\frac{5}{2}k, \hfill \\
\mu^{(2)}_{\alpha}(10)=-\frac{1}{2}-\frac{1}{2}i-\frac{1}{2}j-\frac{5}{2}k,\mu^{(2)}_{\alpha}(11)=\frac{1}{2}-\frac{1}{2}i-\frac{1}{2}j-\frac{5}{2}k, \hfill \\
\mu^{(2)}_{\alpha}(12)=\frac{3}{2}-\frac{1}{2}i-\frac{1}{2}j-\frac{5}{2}k \hfill \end{gathered} \right\}.
\end{equation}
With respect to \hyperref[equ6]{eq. \ref{equ6}}, and \hyperref[cor1]{corollary \ref{cor1}}
\begin{equation}\label{equ120}
\mathcal{H}_{\alpha} = \mathcal{H}^{(1)}_{\alpha}
\end{equation}
From \hyperref[equ127]{eq. \ref{equ127}}, the average energy of $\mathcal{H}_{3+2i}$ is
\begin{equation}
\mathcal{E}_{3+2i} = \frac{28}{13} = 2.1539.
\end{equation}
\end{example}

The average energy for the transmitted signal, considering the sets of residual class with the same cardinality,
the average energy of $\mathcal{H}_{\frac{5}{2} + \frac{3}{2}i + \frac{3}{2}j + \frac{3}{2}k}$ is smaller than the average energy of $\mathcal{H}_{3+2i}.$

\begin{example}
Let $\alpha=3+i+j.$ $\alpha$ is a prime Hurwitz integer because of $N(\alpha)=11.$ From \hyperref[equ4]{eq. \ref{equ4}},
\begin{equation}\label{equ121}
\mathcal{H}^{(1)}_{\alpha} = \left\{ \begin{gathered} \mu^{(1)}_{\alpha}(0)=0,\mu^{(1)}_{\alpha}(1)=1,\mu^{(1)}_{\alpha}(2)=-1-i-j,\mu^{(1)}_{\alpha}(3)=-i-j,
\mu^{(1)}_{\alpha}(4)=1-i-j, \hfill \\
\mu^{(1)}_{\alpha}(5)=2-i-j,\mu^{(1)}_{\alpha}(6)=-2+i+j,\mu^{(1)}_{\alpha}(7)=-1+i+j,\mu^{(1)}_{\alpha}(8)=i+j,\hfill \\
\mu^{(1)}_{\alpha}(9)=1+i+j,\mu^{(1)}_{\alpha}(10)=-1, \hfill \end{gathered} \right\}.
\end{equation}
From \hyperref[equ5]{eq. \ref{equ5}},
\begin{equation}\label{equ122}
\mathcal{H}^{(2)}_{\alpha} = \left\{ \begin{gathered} \mu^{(2)}_{\alpha}(0)=-\frac{1}{2}-\frac{5}{2}i-\frac{3}{2}j-\frac{3}{2}k,\mu^{(2)}_{\alpha}(1)=-\frac{3}{2}+\frac{1}{2}i+\frac{3}{2}j-\frac{3}{2}k, \hfill \\
\mu^{(2)}_{\alpha}(2)=-\frac{1}{2}+\frac{1}{2}i+\frac{3}{2}j-\frac{3}{2}k,\mu^{(2)}_{\alpha}(3)=\frac{1}{2}+\frac{1}{2}i+\frac{3}{2}j-\frac{3}{2}k, \hfill \\
\mu^{(2)}_{\alpha}(4)=-\frac{3}{2}-\frac{1}{2}i+\frac{1}{2}j-\frac{3}{2}k,\mu^{(2)}_{\alpha}(5)=-\frac{1}{2}-\frac{1}{2}i+\frac{1}{2}j-\frac{3}{2}k, \hfill \\
\mu^{(2)}_{\alpha}(6)=\frac{1}{2}-\frac{1}{2}i+\frac{1}{2}j-\frac{3}{2}k,\mu^{(2)}_{\alpha}(7)=\frac{3}{2}-\frac{1}{2}i+\frac{1}{2}j+\frac{3}{2}k, \hfill \\
\mu^{(2)}_{\alpha}(8)=-\frac{1}{2}-\frac{3}{2}i-\frac{1}{2}j-\frac{3}{2}k,\mu^{(2)}_{\alpha}(9)=\frac{1}{2}-\frac{3}{2}i-\frac{1}{2}j-\frac{3}{2}k, \hfill \\
\mu^{(2)}_{\alpha}(10)=\frac{3}{2}-\frac{3}{2}i-\frac{1}{2}j-\frac{3}{2}k, \hfill \end{gathered} \right\}.
\end{equation}
With respect to \hyperref[equ6]{eq. \ref{equ6}}
\begin{equation}\label{equ123}
\mathcal{H}_{\alpha} = \left\{ \begin{gathered} \mu_{\alpha}(0)=\mu^{(1)}_{\alpha}(0)=0,\mu_{\alpha}(1)=\mu^{(1)}_{\alpha}(1)=1,\mu_{\alpha}(2)=\mu^{(1)}_{\alpha}(2)=-1-i-j,\hfill \\
\mu_{\alpha}(3)=\mu^{(1)}_{\alpha}(3)=-i-j,\mu_{\alpha}(4)=\mu^{(1)}_{\alpha}(4)=1-i-j,\hfill \\
\mu_{\alpha}(5)=\mu^{(2)}_{\alpha}(5)=-\frac{1}{2}-\frac{1}{2}i+\frac{1}{2}j-\frac{3}{2}k,\mu_{\alpha}(6)=\mu^{(2)}_{\alpha}(6)=\frac{1}{2}-\frac{1}{2}i+\frac{1}{2}j-\frac{3}{2}k,\hfill \\
\mu_{\alpha}(7)=\mu^{(1)}_{\alpha}(7)=-1+i+j,\mu_{\alpha}(8)=\mu^{(1)}_{\alpha}(8)=i+j,\hfill \\
\mu_{\alpha}(9)=\mu^{(1)}_{\alpha}(9)=1+i+j,\mu_{\alpha}(10)=\mu^{(1)}_{\alpha}(10)=-1, \hfill \end{gathered} \right\}.
\end{equation}
From \hyperref[equ127]{eq. \ref{equ127}}, the average energy of $\mathcal{H}_{3+i+j}$ is
\begin{equation}
\mathcal{E}_{3+i+j} = \frac{24}{11} = 2.1818.
\end{equation}
\end{example}

\footnotesize \begin{table} \label{table1}
\begin{center}
\caption{For $N \leq 50,$ the Average Energies of Prime Hurwitz Integers such that $N=6k+1$ where $k \in \mathbb{Z}^{+}.$ }
\renewcommand{\arraystretch}{1.2} % Default value: 1
\begin{tabular}{|c|c|c|c|}
  \hline
  % after \\: \hline or \cline{col1-col2} \cline{col3-col4} ...
  N & Hurwitz Integers ($\mathbb{Z}$) & Hurwitz Integers ($\mathbb{Z}+\frac{1}{2}$) & Average Energy \\ \hline
  $7$  & $2+i+j+k$ & $\frac{3}{2}+\frac{3}{2}i+\frac{3}{2}j+\frac{1}{2}k$ & $0.8571$ \\ \cline{3-3}
       &           & $\frac{5}{2}+\frac{1}{2}i+\frac{1}{2}j+\frac{1}{2}k$ &          \\ \hline
  $13$ & $2+2i+2j+k$ & $\frac{5}{2}+\frac{3}{2}i+\frac{3}{2}j+\frac{3}{2}k$ & $1.8462$ \\ \cline{3-3}
       &             & $\frac{7}{2}+\frac{1}{2}i+\frac{1}{2}j+\frac{1}{2}k$ &          \\ \hline
  $19$ & $4+i+j+k$   & $\frac{5}{2}+\frac{5}{2}i+\frac{5}{2}j+\frac{1}{2}k$ & $2.5263$ \\ \cline{3-3}
       &             & $\frac{7}{2}+\frac{3}{2}i+\frac{3}{2}j+\frac{3}{2}k$ &          \\ \hline
  $31$ & $3+3i+3j+2k$ & $\frac{7}{2}+\frac{5}{2}i+\frac{5}{2}j+\frac{5}{2}k$ & $4.2581$ \\ \cline{3-3}
       &              & $\frac{11}{2}+\frac{1}{2}i+\frac{1}{2}j+\frac{1}{2}k$ &         \\ \hline
  $37$ & $5+2i+2j+2k$ & $\frac{7}{2}+\frac{7}{2}i+\frac{7}{2}j+\frac{1}{2}k$ & $5.0270$ \\ \cline{3-3}
       &              & $\frac{11}{2}+\frac{3}{2}i+\frac{3}{2}j+\frac{3}{2}k$ &         \\ \hline
  $43$ & $4+3i+3j+3k$ & $\frac{7}{2}+\frac{7}{2}i+\frac{7}{2}j+\frac{5}{2}k$ & $6.0000$ \\ \cline{3-3}
       &              & $\frac{13}{2}+\frac{1}{2}i+\frac{1}{2}j+\frac{1}{2}k$ &         \\ \hline
\end{tabular}
\end{center}
\end{table} \normalsize

\hyperref[table1]{Table I} is presented the average energies of prime Hurwitz integers such that $N=6k+1 \leq 50$ where $k \in \mathbb{Z}^{+}.$

\section{The Code Rate for Codes over Hurwitz Integers} \label{sec6}

Let $\beta$ be a Hurwitz integer. $\beta$ is a unit if and only if $N(\beta)=1.$ There are precisely $24$ units in $\mathcal{H}.$  The set of units in $\mathcal{H}$ that denoted by $\mathcal{U}_{\mathcal{H}}$ is shown by
\begin{equation}\label{}
\mathcal{U}_{\mathcal{H}}= \left \{ \pm 1, \pm i, \pm j, \pm k, \pm 1 \pm i \pm j \pm k, \pm \frac{1}{2} \pm \frac{1}{2}i \pm \frac{1}{2}j \pm \frac{1}{2}k \right\}.
\end{equation}
In this section, we give the code rate of a code $C$ over $\mathcal{H}.$ Note that $\mathcal{H}_{\alpha}$ has $N(\alpha)$ elements. Let $N(\alpha)=p$ be a prime integer such that $p \equiv 1 \mod 24.$ A code $C$ over $\mathcal{H}$ has length $n=\frac{p^{k}-1}{24}.$ Here $k$ is the dimension of a code $C$ over $\mathcal{H}.$ Let $|C| = |\mathcal{H}_{\alpha}| = N(\alpha) = p.$ If $|C|=N(\alpha)=p,$ then $k=\log_{p}|C|=\log_{p}p=1.$ The coding rate of a code $C$ over $\mathcal{H}$ is computed by
\begin{equation}\label{128}
R=\frac{k}{n}=\frac{24k}{p^{k}-1}.
\end{equation}
In this study, we consider $k=1.$
\begin{example}
For $ p \leq 100,$ $p = 73$ and $p = 97$ are prime integers such that $p \equiv 1 \mod 24.$
\begin{itemize}
  \item[i.] For $ N(\alpha) = p = 73,$ $\alpha$ is a prime Hurwitz integer. The length of a code $C$ over $\mathcal{H}$ is $n=\frac{72}{24}=3.$ The rate of a code $C$ over $\mathcal{H}$ is $R=\frac{24}{72}=\frac{1}{3}.$ $C$ is a $(3,1)-$ code over $\mathcal{H}$ because of $n=3$ and $k=1.$
  \item[ii.] For $ N(\alpha) = p = 97,$ $\alpha$ is a prime Hurwitz integer. The length of a code $C$ over $\mathcal{H}$ is $n=\frac{96}{24}=4.$ The rate of a code $C$ over $\mathcal{H}$ is $R=\frac{24}{96}=\frac{1}{4}.$ $C$ is a $(4,1)-$ code over $\mathcal{H}$ because of $n=4$ and $k=1.$
\end{itemize}
\end{example}

\section{Graph Layout Methods} \label{sec7}

Graphs are used to show the relationship between items, in general. Graph drawing enables visualization of these relationships. The usefulness of the visual representation depends upon whether the drawing is aesthetic. While there are no strict criteria for aesthetic drawing, it is generally agreed that such a drawing has minimal edge crossing and even spacing between vertices. Two popular straight-edge drawing algorithms, the spring embedding, and spring-electrical embedding work by minimizing the energy of physical models of the graph. The high-dimensional embedding method, on the other hand, embeds a graph in high-dimensional space and then projects it back to two or three-dimensional space. Random, circular, and spiral embedding do not utilize connectivity information for laying out a graph. In this study, we do not consider random embedding, and circular embedding. We consider spring, the high-dimensional, and spiral embedding methods, in this study. The spring embedding algorithm assigns force between each pair of nodes. When two nodes are too close together, a repelling force comes into effect. When two nodes are too far apart, they are subject to an attractive force. This scenario can be illustrated by linking the vertices with springs, hence the name "spring embedding". In the high-dimensional embedding method, a graph is embedded in high-dimensional space and then projected back to two or three-dimensional space. The high-dimensional embedding method tends to be very fast but its results are often of lower quality than force-directed algorithms. We use the Wolfram Mathematica $10.2$ program, which has algorithms used for layered/hierarchical drawing of directed graphs and, for drawing trees. These algorithms are implemented via four functions: "GraphPlot,"  "GraphPlot3D," "LayeredGraphPlot," and "TreePlot". In this study, we use two of these algorithms for graph drawing, i.e., "GraphPlot,"  and "GraphPlot3D". You can find more details on the \href{https://reference.wolfram.com/language/tutorial/GraphDrawingIntroduction.html#25676449}{wolfram.com}.

\begin{definition}\textbf{Spring Embedding}
The spring embedding is a graph-drawing technique to position vertices of a graph so that they minimize the mechanical energy when each edge corresponds to a spring. The spring embedding is typically used to lay out regular structured graphs. You can find more details on the \href{https://reference.wolfram.com/language/ref/method/SpringEmbedding.html}{wolfram.com}.
\end{definition}

\begin{definition} \textbf{High Dimensional Embedding}
The high-dimensional embedding is a graph-drawing technique to position vertices of a graph in a high-dimensional space, and then project back to two- or three-dimensional space. The high-dimensional embedding is typically used for fast layout of graphs. You can find more details on the \href{https://reference.wolfram.com/language/ref/method/HighDimensionalEmbedding.html}{wolfram.com}.
\end{definition}

\begin{definition} \textbf{Spiral Embedding}
The spiral embedding is a graph-drawing technique to position vertices of a graph on a 3D spiral projected to 2D. The spiral embedding is typically used to lay out path graphs. You can find more details on the \href{https://reference.wolfram.com/language/ref/method/SpiralEmbedding.html}{wolfram.com}.
\end{definition}

\begin{figure}[h!]
  \centering
  \begin{subfigure}[b]{0.2\linewidth}
    \includegraphics[width=\linewidth]{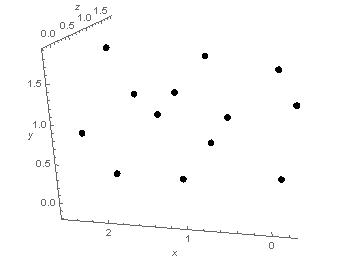}
    \caption{Points on three dimensional coordinate system}
  \end{subfigure}
  \begin{subfigure}[b]{0.2\linewidth}
    \includegraphics[width=\linewidth]{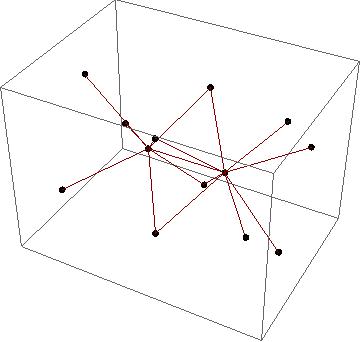}
    \caption{Graph in a three-dimensional box}
  \end{subfigure}
  \begin{subfigure}[b]{0.2\linewidth}
    \includegraphics[width=\linewidth]{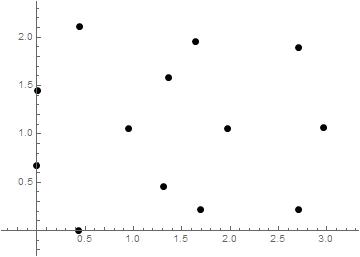}
    \caption{Points on two-dimensional coordinate system}
  \end{subfigure}
  \begin{subfigure}[b]{0.2\linewidth}
    \includegraphics[width=\linewidth]{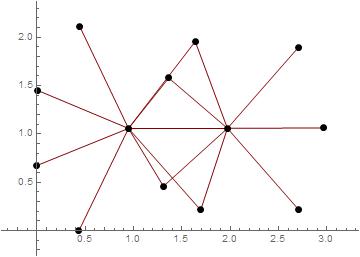}
    \caption{Graph on two-dimensional coordinate system}
  \end{subfigure}
  \caption{The Spring Embedding of $3+2i$ on Two or Three-Dimensional Coordinate System.}
  \label{fig:graph}
\end{figure}

\begin{figure}[h!]
  \centering
  \begin{subfigure}[b]{0.2\linewidth}
    \includegraphics[width=\linewidth]{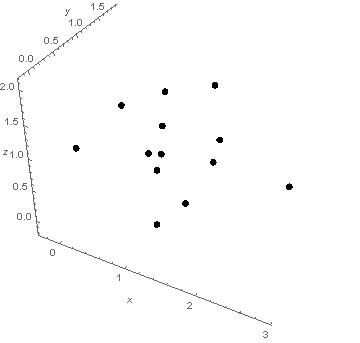}
    \caption{Points on three dimensional coordinate system}
  \end{subfigure}
  \begin{subfigure}[b]{0.2\linewidth}
    \includegraphics[width=\linewidth]{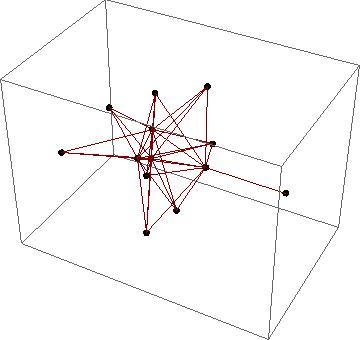}
    \caption{Graph in a three-dimensional box}
  \end{subfigure}
  \begin{subfigure}[b]{0.2\linewidth}
    \includegraphics[width=\linewidth]{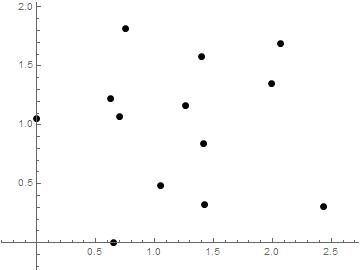}
    \caption{Points on two-dimensional coordinate system}
  \end{subfigure}
  \begin{subfigure}[b]{0.2\linewidth}
    \includegraphics[width=\linewidth]{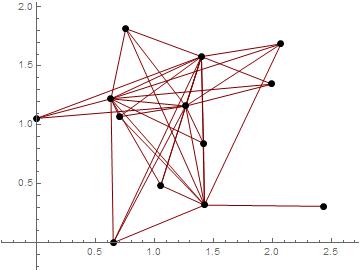}
    \caption{Graph on two-dimensional coordinate system}
  \end{subfigure}
  \caption{The Spring Embedding of $\frac{5}{2}+\frac{3}{2}i+\frac{3}{2}j+\frac{3}{2}k$ on Two or Three-Dimensional Coordinate System.}
  \label{fig:graph}
\end{figure}

\newpage

\begin{figure}[h!]
  \centering
  \begin{subfigure}[b]{0.2\linewidth}
    \includegraphics[width=\linewidth]{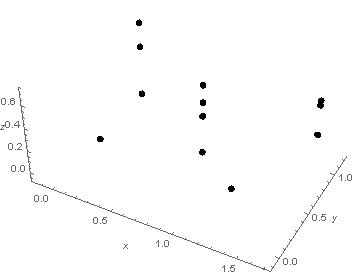}
    \caption{Points on three dimensional coordinate system}
  \end{subfigure}
  \begin{subfigure}[b]{0.2\linewidth}
    \includegraphics[width=\linewidth]{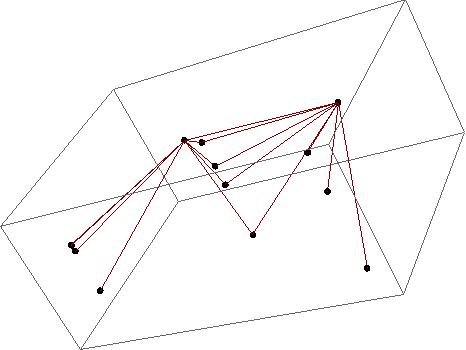}
    \caption{Graph in a three-dimensional box}
  \end{subfigure}
  \begin{subfigure}[b]{0.2\linewidth}
    \includegraphics[width=\linewidth]{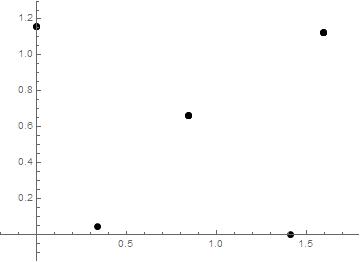}
    \caption{Points on two-dimensional coordinate system}
  \end{subfigure}
  \begin{subfigure}[b]{0.2\linewidth}
    \includegraphics[width=\linewidth]{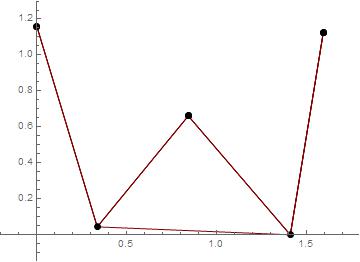}
    \caption{Graph on two-dimensional coordinate system}
  \end{subfigure}
  \caption{The High-Dimensional Embedding of $3+2i$ on Two or Three-Dimensional Coordinate System.}
  \label{fig:graph}
\end{figure}

\begin{figure}[h!]
  \centering
  \begin{subfigure}[b]{0.2\linewidth}
    \includegraphics[width=\linewidth]{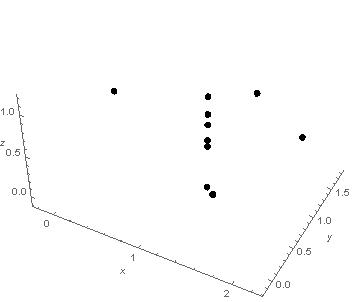}
    \caption{Points on three dimensional coordinate system}
  \end{subfigure}
  \begin{subfigure}[b]{0.2\linewidth}
    \includegraphics[width=\linewidth]{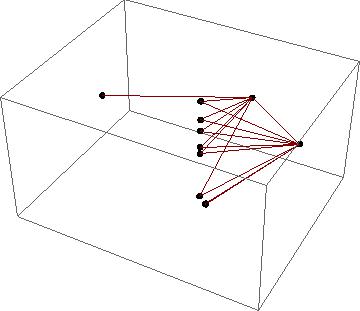}
    \caption{Graph in a three-dimensional box}
  \end{subfigure}
  \begin{subfigure}[b]{0.2\linewidth}
    \includegraphics[width=\linewidth]{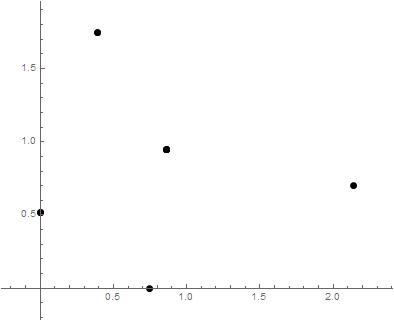}
    \caption{Points on two-dimensional coordinate system}
  \end{subfigure}
  \begin{subfigure}[b]{0.2\linewidth}
    \includegraphics[width=\linewidth]{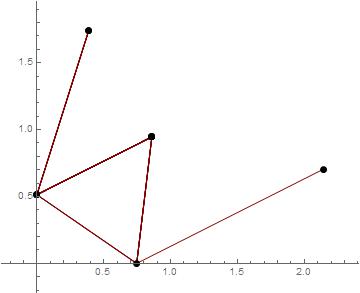}
    \caption{Graph on two-dimensional coordinate system}
  \end{subfigure}
  \caption{The High-Dimensional Embedding of $\frac{5}{2}+\frac{3}{2}i+\frac{3}{2}j+\frac{3}{2}k$ on Two or Three-Dimensional Coordinate System.}
  \label{fig:graph}
\end{figure}

\begin{figure}[h!]
  \centering
  \begin{subfigure}[b]{0.2\linewidth}
    \includegraphics[width=\linewidth]{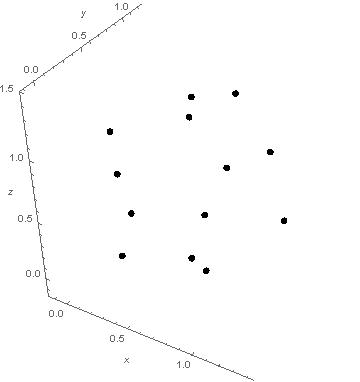}
    \caption{Points on three dimensional coordinate system}
  \end{subfigure}
  \begin{subfigure}[b]{0.2\linewidth}
    \includegraphics[width=\linewidth]{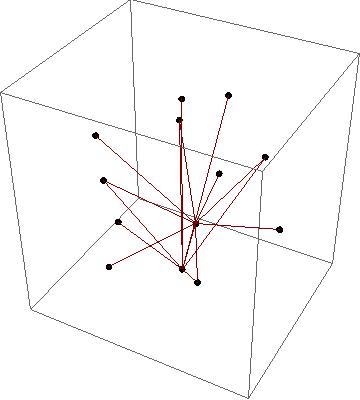}
    \caption{Graph in a three-dimensional box}
  \end{subfigure}
  \begin{subfigure}[b]{0.2\linewidth}
    \includegraphics[width=\linewidth]{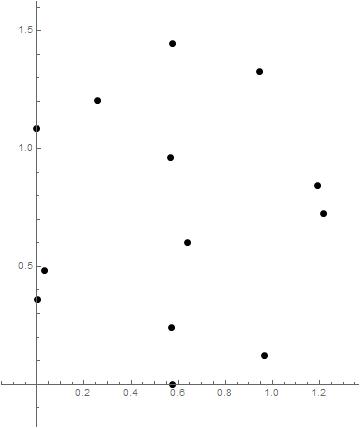}
    \caption{Points on two-dimensional coordinate system}
  \end{subfigure}
  \begin{subfigure}[b]{0.2\linewidth}
    \includegraphics[width=\linewidth]{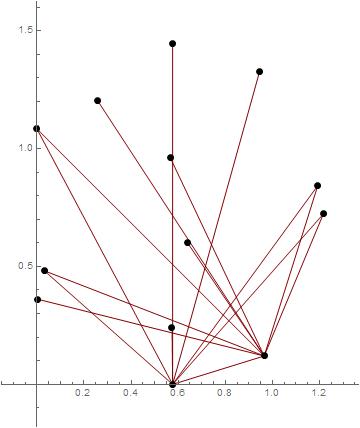}
    \caption{Graph on two-dimensional coordinate system}
  \end{subfigure}
  \caption{The Spiral Embedding of $3+2i$ on Two or Three-Dimensional Coordinate System.}
  \label{fig:graph}
\end{figure}

\begin{figure}[h!]
  \centering
  \begin{subfigure}[b]{0.2\linewidth}
    \includegraphics[width=\linewidth]{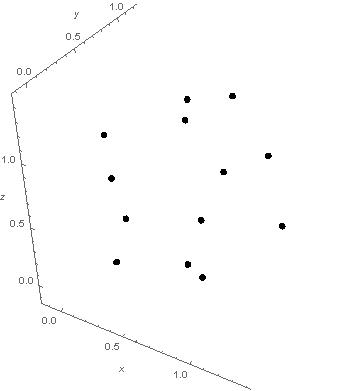}
    \caption{Points on three dimensional coordinate system}
  \end{subfigure}
  \begin{subfigure}[b]{0.2\linewidth}
    \includegraphics[width=\linewidth]{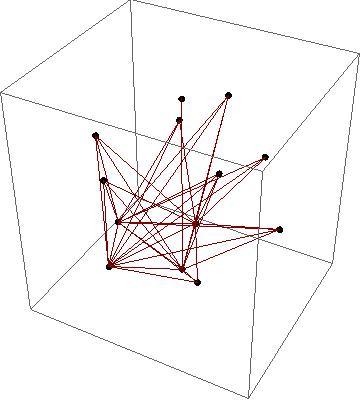}
    \caption{Graph in a three-dimensional box}
  \end{subfigure}
  \begin{subfigure}[b]{0.2\linewidth}
    \includegraphics[width=\linewidth]{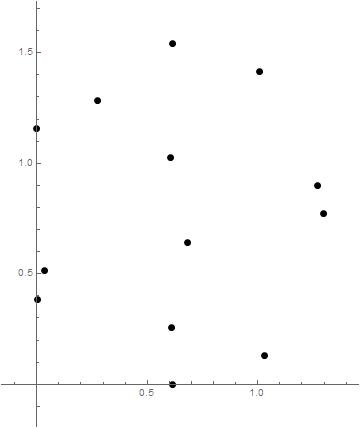}
    \caption{Points on two-dimensional coordinate system}
  \end{subfigure}
  \begin{subfigure}[b]{0.2\linewidth}
    \includegraphics[width=\linewidth]{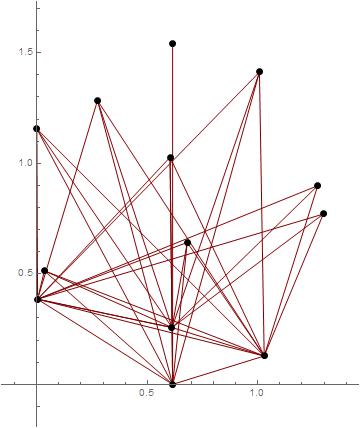}
    \caption{Graph on two-dimensional coordinate system}
  \end{subfigure}
  \caption{The Spiral Embedding of $\frac{5}{2}+\frac{3}{2}i+\frac{3}{2}j+\frac{3}{2}k$ on Two or Three-Dimensional Coordinate System.}
  \label{fig:graph}
\end{figure}

\newpage

\begin{figure}[h!]
  \centering
  \begin{subfigure}[b]{0.2\linewidth}
    \includegraphics[width=\linewidth]{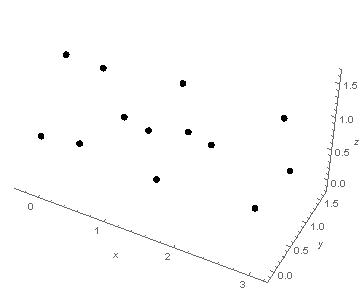}
    \caption{Points on three dimensional coordinate system}
  \end{subfigure}
  \begin{subfigure}[b]{0.2\linewidth}
    \includegraphics[width=\linewidth]{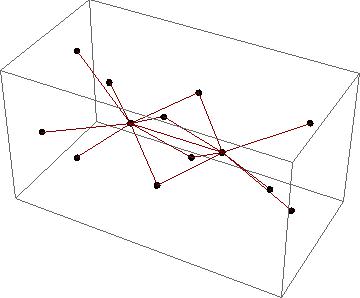}
    \caption{Graph in a three-dimensional box}
  \end{subfigure}
  \begin{subfigure}[b]{0.2\linewidth}
    \includegraphics[width=\linewidth]{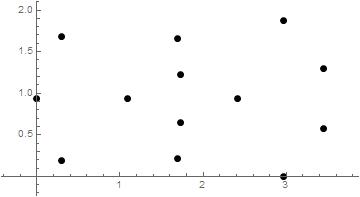}
    \caption{Points on two-dimensional coordinate system}
  \end{subfigure}
  \begin{subfigure}[b]{0.2\linewidth}
    \includegraphics[width=\linewidth]{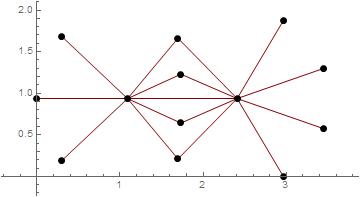}
    \caption{Graph on two-dimensional coordinate system}
  \end{subfigure}
  \caption{Points and Graph of $3+2i$ on Two or Three-Dimensional Coordinate System.}
  \label{fig:graph}
\end{figure}

\begin{figure}[h!]
  \centering
  \begin{subfigure}[b]{0.2\linewidth}
    \includegraphics[width=\linewidth]{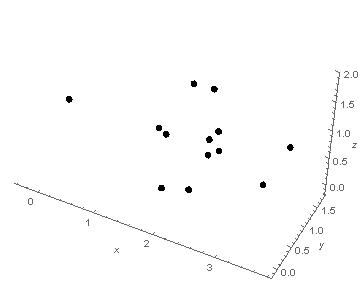}
    \caption{Points on three dimensional coordinate system}
  \end{subfigure}
  \begin{subfigure}[b]{0.2\linewidth}
    \includegraphics[width=\linewidth]{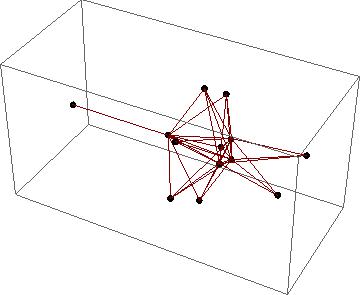}
    \caption{Graph in a three-dimensional box}
  \end{subfigure}
  \begin{subfigure}[b]{0.2\linewidth}
    \includegraphics[width=\linewidth]{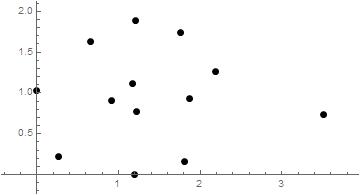}
    \caption{Points on two-dimensional coordinate system}
  \end{subfigure}
  \begin{subfigure}[b]{0.2\linewidth}
    \includegraphics[width=\linewidth]{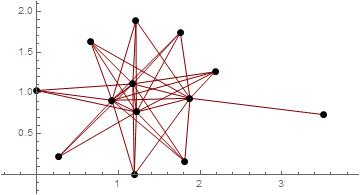}
    \caption{Graph on two-dimensional coordinate system}
  \end{subfigure}
  \caption{Points and Graph of $\frac{5}{2}+\frac{3}{2}i+\frac{3}{2}j+\frac{3}{2}k$ on Two or Three-Dimensional Coordinate System.}
  \label{fig:graph}
\end{figure}

\section{Conclusion} \label{sec8}

In this study, we presented an algebraic construction technique, a modulo function formed depending on two modulo operations, used to construct code constructions over Hurwitz integers. Also, we presented some results for two mathematical notations and, for the algebraic construction technique in this study. In addition, we obtained some new block codes over Hurwitz integers with respect to the modulo function defined in the \hyperref[def10]{definition \ref{def10}} (See \hyperref[sec5]{Section \ref{sec5}}). Moreover, we obtained $(3,1)-$code, and $(4,1)-$code (See \hyperref[sec6]{Section \ref{sec6}}). Lastly, we presented graphs of the residue class obtained with respect to the modulo function defined in the \hyperref[def10]{definition \ref{def10}} in the ring of Hurwitz integers, particularly prime Hurwitz integers.

\bibliographystyle{elsarticle-num}

\end{document}